  \theoremstyle{plain}
  \newtheorem{theorem}{Theorem}
  \newtheorem{lemma}[theorem]{Lemma}
  \newtheorem{observation}[theorem]{Observation}
  \newtheorem{definition}[theorem]{Definition}
\title{Almost Optimal Distance Oracles for Planar Graphs}
\author[1,2]{Panagiotis Charalampopoulos}
\author[3]{Paweł Gawrychowski}
\author[2]{Shay Mozes}
\author[4]{Oren Weimann}
\affil[1]{
Department of Informatics, King's College London, UK\\
\href{mailto:panagiotis.charalampopoulos@kcl.ac.uk}{panagiotis.charalampopoulos@kcl.ac.uk}}
\affil[2]{
Efi Arazi School of Computer Science, The Interdisciplinary Center Herzliya, Israel\\
\href{mailto:smozes@idc.ac.il}{smozes@idc.ac.il}}
\affil[3]{
Institute of Computer Science, University of Wrocław, Poland\\
\href{mailto:gawry@mimuw.edu.pl}{gawry@mimuw.edu.pl}}
\affil[4]{
Department of Computer Science, University of Haifa, Israel\\
\href{mailto:oren@cs.haifa.ac.il}{oren@cs.haifa.ac.il}
}
\date{\vspace{-5ex}}
\algnewcommand{\LineComment}[1]{\State \(\triangleright\) #1}
\definecolor{darkblue}{rgb}{0,0.08,0.45}
\newif\iffull
\newif\ifspreport
\DeclareMathOperator*{\argmin}{argmin}
\newcommand{\cO}{\mathcal{O}}
\newcommand{\TG}{\mathcal{T}}
\newcommand{\AG}{\mathcal{A}}
\newcommand{\cOtilde}{\tilde{\cO}}
\newcommand{\Vor}{\textsf{Vor}}
\newcommand{\VD}{\textsf{VD}}
\newcommand{\out}[1]{#1^{out}}
\newcommand{\VDin}{\textsf{VD}_{in}}
\newcommand{\VDout}{\textsf{VD}_{out}}
\newcommand{\weight}{{\rm \omega}}
\begin{document}

\title{Almost Optimal Distance Oracles for Planar Graphs\thanks{This work was partially supported by the Israel Science Foundation under grant number 592/17.}}
\maketitle

\thispagestyle{empty}

\begin{abstract}
We present new tradeoffs between space and query-time for exact distance oracles in directed weighted planar graphs. 
These tradeoffs are almost optimal in the sense that they are within polylogarithmic, sub-polynomial or arbitrarily small polynomial factors from the na\"{\i}ve linear space, constant query-time lower bound.
These tradeoffs include:  
(i) an oracle with space\footnote{The $\cOtilde(\cdot)$ notation hides polylogarithmic factors.} $\cOtilde(n^{1+\epsilon})$ and query-time $\cOtilde(1)$ 
for any constant $\epsilon>0$, 
(ii) an oracle with space $\cOtilde(n)$ 
and query-time $\cOtilde(n^{\epsilon})$ for any constant $\epsilon>0$, and
(iii) an oracle with space $n^{1+o(1)}$ and query-time $n^{o(1)}$. 
\end{abstract}

\definecolor{uququq}{rgb}{0.25,0.25,0.25}
\definecolor{xdxdff}{rgb}{0.49,0.49,1}

  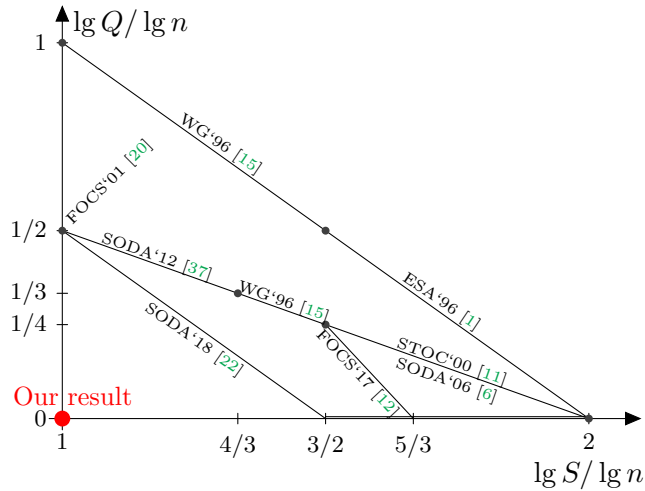
\begin{figure}[h!]
    \begin{center}
\begin{tikzpicture}[line cap=round,line join=round,>=triangle 45,x=7cm,y=5cm]
\draw[->,color=black] (-0.025,0) -- (1.1,0);
\draw[color=black] (0.333333333333333,2pt) -- (0.333333333333333,-2pt) node[below] {\footnotesize $4/3$};
\draw[color=black] (1,2pt) -- (1,-2pt) node[below] {\footnotesize $2$};
\draw[color=black] (0.5,2pt) -- (0.5,-2pt) node[below] {\footnotesize $3/2$};
\draw[color=black] (0.6666666666666,2pt) -- (0.66666666666666,-2pt) node[below] {\footnotesize $5/3$};
\draw[color=black] (0.0,2pt) -- (0.0,-2pt) node[below] {\footnotesize $1$};

\draw[color=black] (2pt,0.25) -- (-2pt,0.25) node[left] {\footnotesize $1/4$};
\draw[color=black] (2pt,0.333333333333333) -- (-2pt,0.333333333333333) node[left] {\footnotesize $1/3$};
\draw[color=black] (2pt,0.5) -- (-2pt,0.5) node[left] {\footnotesize $1/2$};
\draw[color=black] (2pt,1) -- (-2pt,1) node[left] {\footnotesize $1$};
\draw[color=black] (2pt,0) -- (-2pt,0) node[left] {\footnotesize $0$};
\draw[->,color=black] (0,-0.03) -- (0,1.1);
\draw[color=black] (1,-0.15) node {$\lg S/\lg n$};
\draw[color=black] (0.13,1.05) node {$\lg Q/\lg n$};
\clip(-0.1,-0.1) rectangle (1.1,1.1);
\draw (0,1)-- (1,0);
\draw (0,0.5)-- (0.333333333333333,0.3333333333333);
\draw (0.333333333,0.33333333333)-- (0.5,0.25);
\draw (0.5,0.25)-- (1,0);
\draw (0.5,0.25)-- (0.66666666666666,0);
\draw (0.66666666666666,0.005) -- (1,0.005); 
\draw (0,0.5) -- (0.5,0);
\draw (0.5,0.005) -- (0.6666666666,0.005);
\fill [color=uququq] (0,1.0) circle (1.5pt);
\fill [color=uququq] (1.0,0) circle (1.5pt);
\fill [color=uququq] (0,0.5) circle (1.5pt);
\draw[color=black] (0.09,0.63) node[rotate=45] {\tiny FOCS`01 \cite{FR}};
\draw[color=black] (0.30,0.73) node[rotate=-34] {\tiny WG`96 \cite{DBLP:conf/wg/Djidjev96}};
\fill [color=uququq] (0.5,0.5) circle (1.5pt);
\draw[color=black] (0.72,0.31) node[rotate=-34] {\tiny ESA`96 \cite{DBLP:conf/esa/ArikatiCCDSZ96}};
\fill [color=uququq] (0.5,0.25) circle (1.5pt);
\fill [color=uququq] (0.33333333333333,0.333333333333333) circle (1.5pt);
\draw[color=black] (0.42,0.31) node[rotate=-19] {\tiny WG`96 \cite{DBLP:conf/wg/Djidjev96}};
\draw[color=black] (0.74,0.15) node[rotate=-19] {\tiny STOC`00 \cite{DBLP:conf/stoc/ChenX00}};
\draw[color=black] (0.73,0.10) node[rotate=-19] {\tiny SODA`06 \cite{DBLP:journals/algorithmica/Cabello12}};
\draw[color=black] (0.56,0.12) node[rotate=-47] {\tiny FOCS`17 \cite{DBLP:conf/focs/Cohen-AddadDW17}};
\draw[color=black] (0.18,0.43) node[rotate=-19] {\tiny SODA`12 \cite{DBLP:conf/soda/MozesS12}};
\draw[color=black] (0.25,0.21) node[rotate=-35] {\tiny SODA`18 \cite{vorexact}};

\fill [color=red] (0,0) circle (3pt);
\draw[color=red] (0.02,0.06) node {\small Our result};

\end{tikzpicture}
    \end{center}
    \caption{Tradeoff of the space ($S$) vs.~the query time ($Q$) for exact distance oracles in planar graphs on a doubly logarithmic scale, hiding subpolynomial factors. The previous tradeoffs are indicated by solid black lines and points, while our new tradeoff is indicated by the red point at the bottom left.}\label{fig:tradeoffs}
  \end{figure}
  
\clearpage
\setcounter{page}{1}

\section{Introduction}

Computing shortest paths is one of the most fundamental and well-studied algorithmic problems, with numerous applications in various fields.
In the data structure version of the problem, the goal is to preprocess a graph into a compact representation such that the distance (or a shortest path) between any pair of vertices can be retrieved efficiently.
Such data structures are called \emph{distance oracles}.
Distance oracles are useful in applications ranging from navigation, geographic information systems and logistics, to computer games, databases, packet routing, web search, computational biology, and social networks. The topic has been studied extensively; see for example the survey by Sommer~\cite{DBLP:journals/csur/Sommer14} for a comprehensive overview and references.

The two main measures of efficiency of a distance oracle are the space it occupies and the time it requires to answer a distance query. To appreciate the tradeoff between these two quantities consider two na\"{\i}ve oracles. 
The first stores all $\Theta(n^2)$ pairwise distances in a table, and answers each query in constant time using table lookup. The second only stores the input graph, and runs a shortest path algorithm over the entire graph upon each query. 
Both of these oracles are not adequate when working with mildly large graphs. 
The first consumes too much space, and the second is too slow in answering queries. 
A third quantity of interest is the preprocessing time required to construct the oracle. 
Since computing the data structure is done offline, this quantity is often considered less important. 
However, one approach to dealing with dynamic networks is to recompute the entire data structure quite frequently, which is only feasible when the preprocessing time is reasonably small.

One of the ways to design oracles with small space is to consider {\em approximate distance oracles} (allowing a small stretch in the distance output). However, it turns out that one cannot get both small stretch and small space. 
In their seminal paper, Thorup and Zwick~\cite{DBLP:journals/jacm/ThorupZ05} showed that, assuming the girth conjecture of Erd\H{o}s~\cite{Erdos}, there exist dense 
graphs for which no oracle with size less than $n^{1+1/k}$ and stretch less than $2k-1$ exists.
 P\v{a}tra\c{s}cu  and Roditty~\cite{DBLP:journals/siamcomp/PatrascuR14} showed that even sparse graphs with $\cOtilde(n)$ edges do not have distance oracles with stretch better than 2 and subquadratic space, conditioned on a widely believed conjecture on the hardness of set intersection.  To bypass these impossibility results one can impose additional structure on the graph.  
In this work we follow this approach and focus on distance oracles for planar graphs. 

\paragraph{Distance oracles for planar graphs.}

The importance of studying distance oracles for planar graphs stems from several reasons. First, distance oracles for planar graphs are ubiquitous in real-world applications such as geographical navigation on road networks~\cite{DBLP:journals/csur/Sommer14} (road networks are often theoretically modeled as planar graphs even though they are not quite planar due to tunnels and overpasses). Second, shortest paths in planar graphs exhibit a remarkable wealth of structural properties that can be exploited to obtain efficient oracles. Third, techniques developed for shortest paths problems in planar graphs often carry over (because of intricate and elegant connections) to maximum flow problems. Fourth, planar graphs have proved to be an excellent sandbox for the development of algorithms and techniques that extend to broader families of graphs. 

As such, distance oracles for planar graphs have been extensively studied. Works on exact distance oracles for planar graphs started in the 1990's with oracles requiring $\cOtilde(n^2/Q)$ space and $\cOtilde(Q)$ query-time for any $Q \in [1,n]$~\cite{DBLP:conf/wg/Djidjev96,DBLP:conf/esa/ArikatiCCDSZ96}. 
Note that this includes the two trivial approaches mentioned above.  
Over the past three decades, many other works presented exact distance oracles for planar graphs with increasingly better space to query-time tradeoffs~\cite{DBLP:conf/wg/Djidjev96,DBLP:conf/esa/ArikatiCCDSZ96,DBLP:conf/stoc/ChenX00,FR,DBLP:conf/wads/Nussbaum11,DBLP:conf/soda/MozesS12,DBLP:journals/algorithmica/Cabello12,DBLP:conf/focs/Cohen-AddadDW17,vorexact}. Figure~\ref{fig:tradeoffs} illustrates the advancements in the space/query-time tradeoffs over the years.
Until recently, no distance oracles with subquadratic space and polylogarithmic query time were known. 
Cohen-Addad et al.~\cite{DBLP:conf/focs/Cohen-AddadDW17}, inspired by Cabello's use of Voronoi diagrams for the diameter problem in planar graphs~\cite{DBLP:conf/soda/Cabello17}, provided the first such oracle. 
The currently best known tradeoff~\cite{vorexact} is an oracle with  $\cOtilde(n^{3/2}/Q)$ space and $\cOtilde(Q)$ query-time for any $Q \in [1,n^{1/2}]$.
Note that all known oracles with nearly linear (i.e. $\cOtilde(n)$) space require $\Omega(\sqrt{n})$ time to answer queries.

The holy grail in this area is to design an exact distance oracle for planar graphs with both linear space and constant query-time. 
It is not known whether this goal can be achieved. 
We do know, however, (for nearly twenty years now) that 
approximate distance oracles can get very close.
For any fixed $
\epsilon >0$ there are $(1+\epsilon)$-approximate distance oracles 
that occupy nearly-linear space and answer queries in polylogarithmic, or even constant time~\cite{Thorup04,Klein2002,KawarabayashiST13,SommerICALP11,DBLP:conf/isaac/GuX15,Wulff-Nilsen16,DBLP:conf/esa/ChanS17}. 
However, the main question of whether an approximate answer is the best one can hope for, or whether exact distance oracles for planar graphs with linear space and constant query time exist, remained a wide open important and interesting problem.

\paragraph{Our results and techniques.}
In this paper we approach the optimal tradeoff between space and query time for reporting exact distances.We design exact distance oracles that require almost-linear space and answer distance queries in polylogarithmic time.
Specifically, given a planar graph of size $n$, we show how to construct in roughly $\cO(n^{3/2})$ time a distance oracle admitting any of the following $\langle$space, query-time$\rangle$ tradeoffs (see Theorem~\ref{main} and \cref{cor:tradeoffs} for the exact statements).
\begin{enumerate}[label=(\roman*)]
\item $\langle \cOtilde(n^{1+\epsilon}),\cO(\log^{1/\epsilon} n)\rangle$, for any constant $1/2 \geq \epsilon>0$;
\item $\langle \cO(n \log^{2+1/\epsilon} n),\cOtilde(n^{\epsilon})\rangle$, for any constant $\epsilon>0$;
\item $\langle n^{1+o(1)},n^{o(1)}\rangle$.
\end{enumerate}

\paragraph{Voronoi diagrams.}
The main tool we use to obtain this result is point location in \emph{Voronoi diagrams on planar graphs}.
The concept of Voronoi diagrams has been used in computational geometry for many years (cf.~\cite{Aurenhammer,BergCKO08}). We consider \emph{graphical} (or network) Voronoi diagrams~\cite{DBLP:journals/ipl/Mehlhorn88,DBLP:journals/networks/Erwig00}.
At a high level, a graphical Voronoi diagram with respect to a set $S$ of sites is a partition of the vertices into $|S|$ parts, called Voronoi cells, where the cell of site $s \in S$ contains all vertices that are closer (in the shortest path metric) to $s$ than to any other site in $S$. 
Graphical Voronoi diagrams have been studied and used quite extensively, most notably in applications in road networks (e.g.,~\cite{DBLP:journals/gis/OkabeSFSO08,DBLP:conf/gis/EppsteinG08,DBLP:journals/tcos/HonidenHSW10,DBLP:conf/dasfaa/DemiryurekS12}). 

Perhaps the most elementary operation on Voronoi diagrams is {\em point location}. Given a point (vertex) $v$, one wishes to efficiently determine the site $s$ such that $v$ belongs to the Voronoi cell of $s$. 
Cohen-Addad et al.~\cite{DBLP:conf/focs/Cohen-AddadDW17}, inspired by Cabello's~\cite{DBLP:conf/soda/Cabello17} breakthrough use of Voronoi diagrams in planar graphs, suggested a way to perform point location that led to the first exact distance oracle for planar graphs with subquadratic space and polylogarithmic query time. 
A simpler and more efficient point location mechanism for Voronoi diagrams in planar graphs was subsequently developed in~\cite{vorexact}. In both oracles, the space efficiency is obtained from the fact that the size of the representation of a Voronoi diagram is proportional to the number of sites and not to the total number of vertices.

To obtain our result, we add two new ingredients to the point location mechanism of~\cite{vorexact}.
The first is the use of what might be called {\em external Voronoi diagrams}. 
Unlike previous constructions, instead of working with Voronoi diagrams for every piece in some partition ($r$-division) of the graph, we work with many overlapping Voronoi diagrams, representing the complements of such pieces. 
This is analogous to the use of external dense distance graphs in~\cite{DBLP:conf/focs/BorradaileSW10,DBLP:conf/soda/MozesS12}. 
This approach alone leads to an oracle with space $\cOtilde(n^{4/3})$ and query time $\cO(\log^2 n)$ (see Section~\ref{sec:43}). 
The obstacle with pushing this approach further is that the point location mechanism consists of auxiliary data for each piece, whose size is proportional to the size of the complement of the piece, which is $\Theta (n)$ rather than the much smaller size of the piece. 
We show that this problem can be mitigated by using recursion, and storing much less auxiliary data at a coarser level of granularity. 
This approach is made possible by a more modular view of the point location mechanism which we present in Section~\ref{sec:prelim}, along with other preliminaries. The proof of our main space/query-time tradeoffs is given in Section~\ref{sec:onehole} and the algorithm to efficiently construct these oracles is given in Section~\ref{sec:constr}.

\section{Preliminaries} \label{sec:prelim}

In this section we review the main techniques required for describing our result. Throughout the paper we consider as input a weighted directed planar graph $G=(V(G),E(G))$, embedded in the plane.
(We use the terms weight and length for edges and paths interchangeably throughout the paper.)
We use $|G|$ to denote the number of vertices in $G$. Since planar graphs are sparse, $|E(G)| = \cO(|G|)$ as well.
The dual of a planar graph $G$ is another planar graph $G^*$ whose vertices correspond to faces of $G$ and
vice versa. Arcs of $G^*$ are in one-to-one correspondence with arcs of $G$; 
there is an arc $e^*$ from vertex $f^*$ to vertex $g^*$ of $G^*$
if and only if the corresponding faces $f$ and $g$ of $G$ are to the left and right of the arc $e$, respectively.

We assume that the input graph has no negative length cycles.
We can transform the graph in a standard~\cite{DBLP:conf/esa/MozesW10} way so that all edge weights are non-negative and distances are preserved.
With another standard transformation we can guarantee, in $\cO(n \frac{\log^2 n}{\log \log n})$ time, that each vertex has constant degree and that the graph is triangulated, while distances are preserved and without increasing asymptotically the size of the graph.
We further assume that shortest paths are unique; this can be ensured in $\cO(n)$ time by a deterministic perturbation of the edge weights~\cite{DBLP:conf/stoc/0001FL18}.
Let $d_G(u,v)$ denote the distance from a vertex $u$ to a vertex $v$ in $G$.

\paragraph{Multiple-source shortest paths.} The multiple-source shortest paths (MSSP) data structure~\cite{MSSP} represents all shortest path trees rooted at the vertices of a single face $f$ in a planar graph using a persistent dynamic tree.
It can be constructed in $\cO(n \log n)$ time, requires $\cO(n\log n)$ space, and can report any distance between a vertex of $f$ and any other vertex in the graph in $\cO(\log n)$ time. 
We note that it can be augmented to also return the first edge of this path within the same complexities.
The authors of~\cite{vorexact} show that it can be further augmented ---within the same complexities--- such that given two vertices $u,v \in G$ and a vertex $x$ of $f$ it can return, in $\cO(\log n)$ time, whether $u$ is an ancestor of $v$ in the shortest path tree rooted at $x$ as well as whether $u$ occurs before $v$ in a preorder traversal of this tree.

\paragraph{Separators and recursive decompositions.}
Miller~\cite{DBLP:conf/stoc/Miller84} showed how to compute, in a biconnected triangulated planar graph with $n$ vertices, a simple cycle of size  $\cO(\sqrt{n})$ that separates the graph into two subgraphs, each with at most $2n/3$ vertices. Simple cycle separators  can be used to recursively separate a planar graph until pieces have constant size.
The authors of~\cite{DBLP:conf/stoc/KleinMS13} show how to obtain a complete recursive decomposition tree $\AG$ of $G$ in $\cO(n)$ time. 
$\AG$ is a binary tree whose nodes correspond to subgraphs of $G$ (pieces), with the root being all of $G$ and the leaves being pieces of constant size.
We identify each piece $P$ with the node representing it in $\AG$. We can thus abuse notation and write $P\in \AG$.
An $r$-division~\cite{DBLP:journals/siamcomp/Frederickson87} of a planar graph, for $r \in [1,n]$, is a decomposition of the graph into $\cO(n/r)$ pieces, each of size $\cO(r)$, such that each piece has $\cO(\sqrt{r})$ boundary vertices, i.e.~vertices that belong to some separator along the recursive decomposition used to obtain $P$.
Another desired property of an $r$-division is that the boundary vertices lie on a constant number of faces of the piece (holes).
For every $r$ larger than some constant, an $r$-division with few holes is represented in the decomposition tree $\AG$ of~\cite{DBLP:conf/stoc/KleinMS13}. In fact, it is not hard to see that if the original graph $G$ is triangulated then all vertices of each hole of a piece are boundary vertices.
Throughout the paper, to avoid confusion, we use ``nodes" when referring to $\AG$ and ``vertices" when referring to $G$.
We denote the boundary vertices of a piece $P$ by $\partial P$. We refer to non-boundary vertices as internal.
We assume for simplicity that each hole is a simple cycle. Non-simple cycles do not pose a significant obstacle, as we discuss  at the end of~\cref{sec:onehole}.

It is shown in~\cite[Theorem 3]{DBLP:conf/stoc/KleinMS13} that, given a geometrically decreasing sequence of numbers $(r_m, r_{m-1}, \ldots, r_1)$, where $r_1$ is a sufficiently large constant, $r_{i+1}/r_i=b$ for all $i$ for some $b>1$, and $r_m=n$, we can obtain the $r_i$-divisions for all $i$ in time $\cO(n)$ in total.
For convenience we define the only piece in the $r_m$ division to be $G$ itself.
These $r$-divisions satisfy the property that a piece in the $r_i$-division is a ---not necessarily strict--- descendant (in $\AG$) of a piece in the $r_j$-division for each $j>i$. This ancestry relation between the pieces of an $r$-division can be captures by a tree $\TG$ called the recursive $r$-division tree.

The boundary vertices of a piece $P \in \AG$ that lie on a hole $h$ of $P$ separate the graph $G$ into two subgraphs $G_1$ and $G_2$ (the cycle is in both subgraphs). One of these two subgraphs is enclosed by the cycle and the other is not. Moreover, $P$ is a subgraph of one of these two subgraphs, say $G_1$. We then call $G_2$ the outside of hole $h$ with respect to $P$ and denote it by $P^{h,out}$.
In the sections where we assume that the boundary vertices of each piece lie on a single hole that is a simple cycle, the outside of this hole with respect to $P$ is $G\setminus(P \setminus \partial P)$ and to simplify notation we denote it by $\out{P}$. 

\paragraph{Additively weighted Voronoi diagrams.}
Let $H$ be a directed planar graph with real edge-lengths, and no negative-length cycles. Assume that all faces of $H$ are triangles except, perhaps, a single face $h$.
Let $S$ be the set of vertices that lie on $h$. The vertices of $S$ are called {\em sites}.
Each site $u\in S$ has a weight $\weight(s) \geq 0$ associated with it. The additively weighted distance between a site $s \in S$ and a vertex $v \in V$, denoted by $d^\weight(s,v)$ is defined as $\weight(s)$ plus the length of the $s$-to-$v$ shortest path in $H$.

\begin{definition}
The additively weighted Voronoi diagram of $(S,\weight)$ ($\VD(S, \weight)$) within $H$ is a partition of $V(H)$ into pairwise disjoint sets, one set $\Vor(s)$ for each site $s \in S$. The set $\Vor(s)$, which is called the Voronoi cell of $s$, contains all vertices in $V(H)$ that are closer (w.r.t. $d^\weight$(. , .)) to $s$ than to any other site in $S$.
\end{definition}

There is a dual representation $\VD^{*}(S,\weight)$ (or simply $\VD^{*}$) of a Voronoi diagram $\VD(S,\weight)$. 
Let $H^*$ be the planar dual of $H$. Let $\VD^*_0$ be the subgraph of $H^*$ consisting of the duals of edges $uv$ of $H$ such that $u$ and $v$ are in different Voronoi cells. Let $\VD_1^*$ be the graph obtained from $\VD^*_0$ by contracting edges incident to degree-2 vertices one after another until no degree-2 vertices remain. 
The vertices of $\VD^*_1$ are called Voronoi vertices. A Voronoi vertex $f^* \neq h^*$ is dual to
a face $f$ such that the vertices of $H$ incident to $f$ belong to three
different Voronoi cells. 
We call such a face {\em trichromatic}.
Each Voronoi vertex $f^*$ stores for each vertex $u$ incident to $f$ the site $s$ such that $u \in \Vor(s)$. 
Note that $h^{*}$ (i.e.~the dual vertex corresponding to the face $h$ to which all the sites are incident) is a Voronoi vertex.
Each face of $\VD_1^{*}$ corresponds to a cell
$\Vor(s)$. Hence there are at most $|S|$ faces in $\VD_1^*$. By sparsity of planar graphs, and by the fact that the minimum degree of a vertex in $\VD_1^*$ is 3, the complexity (i.e., the number of vertices, edges and faces) of $\VD_1^*$ is $\cO(|S|)$. 
Finally, we define $\VD^*$ to be the graph obtained from $\VD_1^{*}$ after replacing the node $h^{*}$ by multiple
copies, one for each occurrence of $h$ as an endpoint of an edge in $\VD_1^*$. 
It was shown in~\cite{vorexact} that $\VD^*$ is a forest, and that, if all vertices of $h$ are sites and if the additive weights are such that each site is in its own nonempty Voronoi cell, then $\VD^*$ is a ternary tree. 
See~\cref{fig:tree} (also used in~\cite{vorexact}) for an illustration. \iffull
\begin{figure}[htpb]
\begin{center}
\includegraphics[width=0.65\textwidth]{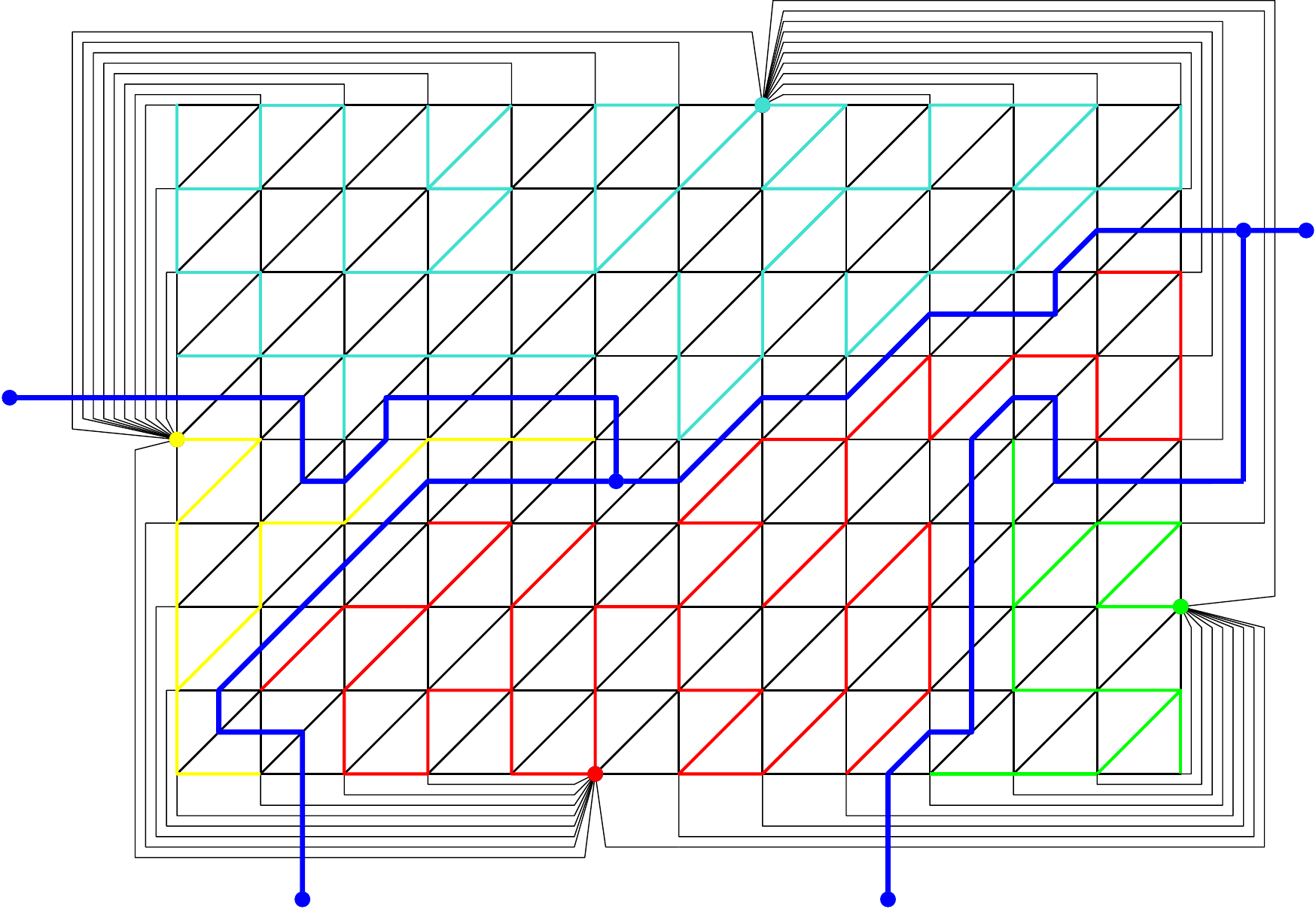}
\end{center}
\caption{A planar graph (black edges) with four sites on the infinite face together with  the dual Voronoi diagram $\VD^{*}$ (in blue). The sites are shown together with their corresponding shortest path trees (in turquoise, red, yellow, and green).\label{fig:tree}}
\end{figure}
\else
Due to space constraints the figure appears in the Appendix.
\fi

\paragraph{Point location in Voronoi diagrams.}
A \emph{point location query} for a node $v$ in a Voronoi diagram $\VD$ asks for the site $s$ of $\VD$ such that $v \in \Vor(s)$ and for the additive distance from $s$ to $v$. Gawrychowski et al.~\cite{vorexact} described a data structure supporting efficient point location, which is captured by the following theorem.

\begin{theorem}[\cite{vorexact}]\label{thm:pointloc18}
Given an $\cO(|H| \log |H|)$-sized MSSP data structure for $H$ with sources $S$, and after an $\cO(|S|)$-time preprocessing of $\VD^*$, point location queries can be answered in time $\cO(\log^2 |H|)$.
\end{theorem}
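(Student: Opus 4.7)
The plan is to exploit the fact that $\VD^{*}$ is a ternary tree on $\cO(|S|)$ nodes by preprocessing it with a centroid decomposition in $\cO(|S|)$ time. This yields a decomposition of depth $\cO(\log |S|) = \cO(\log |H|)$; if at every centroid of $\VD^{*}$ we can identify in $\cO(\log |H|)$ time which of its three adjacent subtrees the query vertex $v$ belongs to, the total query time will be $\cO(\log^{2}|H|)$. At the leaves of the centroid decomposition we will have narrowed $v$ down to a single site $s$, at which point one final MSSP distance query returns $d^{\weight}(s,v)$.

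The crucial local step is the following. Let $f^{*}$ be the current centroid and, assuming it is not a copy of $h^{*}$, let $f$ be the trichromatic face of $H$ it is dual to, with incident vertices $u_1,u_2,u_3$ lying in $\Vor(s_1),\Vor(s_2),\Vor(s_3)$ respectively. Removing $f^{*}$ from $\VD^{*}$ splits it into three subtrees, each collecting a subset of the sites, and we must decide which subtree contains the site owning $v$. The unique shortest paths $P_i$ from $s_i$ to $u_i$, together with $f$, bound three regions of the plane; because $\VD^{*}$ is planar and its leaves are copies of $h^{*}$, these regions correspond bijectively to the three subtrees of $f^{*}$. To determine the region containing $v$, we use the augmented MSSP data structure of~\cite{vorexact}: a constant number of preorder/ancestor queries in the shortest path trees rooted at $s_1,s_2,s_3$ tell us on which side of each $P_i$ the vertex $v$ lies, each query costing $\cO(\log |H|)$. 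This pins down the correct region and thus the subtree into which we recurse.

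The $\cO(|S|)$-time preprocessing simply builds the centroid decomposition and stores, at each centroid, the triple $(s_1,s_2,s_3)$, the triple $(u_1,u_2,u_3)$, and pointers to its three child subtrees. The main obstacle is the geometric claim that the three regions carved out by $P_1,P_2,P_3$ really do match the three subtrees of $f^{*}$ in $\VD^{*}$, and that a constant number of preorder/ancestor tests suffice to classify $v$; this rests on uniqueness of shortest paths (ensured by the perturbation in the preliminaries) and on the ternary-tree structure of $\VD^{*}$. Once this is in hand the complexity follows by combining the two logarithms: $\cO(\log |H|)$ MSSP work at each of $\cO(\log |H|)$ centroids along the root-to-leaf path in the centroid decomposition.
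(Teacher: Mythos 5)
Your proposal follows essentially the same strategy as the paper: centroid-decompose the ternary tree $\VD^*$ to depth $\cO(\log|S|)$, and at each centroid spend $\cO(\log|H|)$ time using MSSP preorder/ancestor queries to decide which of the three subtrees to recurse into, yielding the stated $\cO(\log^2|H|)$ query time. This is the right high-level idea and matches the paper's Algorithms~\ref{alg:pointloc1} and~\ref{alg:handle1}.

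Where you diverge is in the local test at each centroid. The paper (following~\cite{vorexact}) first issues three MSSP \emph{distance} queries to compute the additive distances $\weight(s_k)+d_H(s_k,v)$ for $k=1,2,3$, takes the argmin $s_j$, and then performs a \emph{single} left/right/ancestor test of $v$ against the artificial vertex $v^c_j$ in the shortest-path tree of $s_j$; the key lemma of~\cite{vorexact} is precisely that this one test, conditioned on $s_j$ being the additively nearest of the three, determines the correct child or certifies $v\in\Vor(s_j)$. You instead propose to classify $v$ directly into one of the three geometric regions carved out by $P_1,P_2,P_3$ and $f$ via side tests against the three paths, skipping the argmin. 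This geometric picture is the intuition behind why the paper's test works, but it is a different mechanism, and the bijection between the three plane regions and the three subtrees of $f^*$, together with the claim that the side tests correctly classify $v$ for all cases (including $v$ lying in $\Vor(s_i)$ off the path $P_i$, or $v$ inside $f$), is exactly the content you leave unproved. Two smaller imprecisions: the leaves of the centroid decomposition are single \emph{edges} of $\VD^*$, not single sites, so the base case must compare the additive distances of the two incident sites; and the tests should be against the artificial vertices $v^c_i$ rather than the face vertices $u_i$, since the centroid face $f$ itself must be placed on the correct side. None of this is fatal, but the paper's argmin-then-one-test formulation is the one whose correctness is actually established in~\cite{vorexact}, whereas yours would require re-proving the regional correspondence from scratch.
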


We now describe this data structure.
The data structure is essentially the same as in~\cite{vorexact}, but the presentation is a bit more modular. We will later adapt the implementation in~\cref{sec:onehole}.

Recall that $H$ is triangulated (except the face $h$).
For technical reasons, for each face $f\neq h$ of $H$ with vertices $y_1,y_2,y_3$ we embed three new vertices $v^f_1,v^f_2,v^f_3$ inside $f$ and add a directed $0$-length edge from $y_i$ to $v^f_i$.
The main idea is as follows. In order to find the Voronoi cell $\Vor(s)$ to which a query vertex $v$ belongs, it suffices to identify an edge $e^*$ of $\VD^*$ that is adjacent to $\Vor(s)$. Given $e^*$ we can simply check which of its two adjacent cells contains $v$ by comparing the distances from the corresponding two sites to $v$. The point location structure is based on a {\em centroid decomposition} of the tree $\VD^*$ into connected subtrees, and on the ability to determine which of the subtrees is the one that contains the desired edge $e^*$.

The preprocessing consists of just computing a centroid decomposition of $\VD^*$. 
A {\em centroid} of an $n$-node tree $T$ is a node $u\in T$ such that removing $u$ and replacing it with copies, one for each edge incident to $u$, results in a set of trees, each with at most $\frac{n+1}{2}$ edges. A centroid always exists in a tree with at least one edge.
In every step of the centroid decomposition of $\VD^*$, we work with
a connected subtree $T^*$ of $\VD^*$. Recall that 
there are no nodes of degree 2 in $\VD^*$.
If there are
no nodes of degree 3, then $T^*$ consists of a single edge of $\VD^*$, and the decomposition terminates. Otherwise, we choose a centroid $c^*$, and partition $T^*$
into the three subtrees $T^*_{0},T^*_{1},T^*_{2}$
obtained by splitting $c^*$ into three copies, one for each edge incident to $c^*$. Clearly, the depth of the recursive decomposition is $\cO(\log |S|)$. The decomposition can be computed in $\cOtilde(|S|)$ time and can be represented as a ternary tree which we call the {\em centroid decomposition tree}, in $\cO(|S|)$ space. Each non-leaf node of the centroid decomposition tree corresponds to a centroid vertex $c^*$, which is stored explicitly. We will refer to nodes of the centroid decomposition tree by their associated centroid. Each node also corresponds implicitly to the subtree of $\VD^*$ of which $c^*$ is the centroid. The leaves of the centroid decomposition tree correspond to single edges of $\VD^*$, which are stored explicitly.

Point location queries for a vertex $v$ in the Voronoi diagram $\VD$ are answered by employing procedure \textsc{PointLocate}$(\VD^*,v)$ (\cref{alg:pointloc1}), which takes as input the Voronoi diagram, represented by the centroid decomposition of $\VD^*$, and the vertex $v$.
This in turn calls the recursive procedure \textsc{HandleCentroid}$(\VD^*,c^*,v)$ (\cref{alg:handle1}), where $c^*$ is the root centroid in the centroid decomposition tree of $\VD^*$.

\medskip
\alglanguage{pseudocode}
\begin{algorithm}[b]
\caption{$\textsc{PointLocate}$($\VD^*$, $v$)
\label{alg:pointloc1}}
 \textbf{Input:} The centroid decomposition of the dual representation $\VD^*$ of a Voronoi diagram $\VD$ and a vertex $v$.\\
 \textbf{Output:} The site $s$ of $\VD$ such that $v \in \Vor(s)$ and the additive distance from $s$ to $v$.
\begin{algorithmic}[1]
    \State $c^* \leftarrow$ root centroid in the centroid decomposition of $\VD^*$
\State \textbf{return} \textsc{HandleCentroid}$(\VD^*,c^*,v)$

\Statex
\end{algorithmic}
  \vspace{-0.4cm}%
\end{algorithm}

We now describe the procedure \textsc{HandleCentroid}. It gets as input the Voronoi diagram, represented by its centroid decomposition tree, the centroid node $c^*$ in the centroid decomposition tree that should be processed, and the vertex $v$ to be located. It returns the site $s$ such that $v \in \Vor(s)$, and the additive distance to $v$.
If $c^*$ is a leaf of the centroid decomposition, then its corresponding subtree of $\VD^*$ is the single edge $e^*$ we are looking for (Lines~\ref{ps1:basestart}-\ref{ps1:baseend}). 
Otherwise, $c^*$ is a non-leaf node of the centroid decomposition tree, so it corresponds to a node in the tree $\VD^*$, which is also a vertex of the dual $H^*$ of $H$. Thus, $c$ is a face of $H$. Let the vertices of $c$ be $y_1, y_2$ and $y_3$.
We obtain $s_i$, the site such that $y_i \in \Vor(s_i)$, from the representation of $\VD^*$ (Line~\ref{ps1:sites}).
(Recall that $s_i \neq s_k$ if $i \neq k$ since $c$ is a trichromatic face.)
Next, for each $i$, we retrieve the additive distance from $s_i$ to $v$. 
Let $s_j$ be the site among them with minimum additive distance to $v$.

If $v$ is an ancestor of the node $y_j$ in the shortest path tree rooted at $s_j$, then $v \in \Vor(s_j)$ and we are done. Otherwise, handling $c$ consists of finding the child $c'^*$ of $c^*$ in the centroid decomposition tree whose corresponding subtree contains $e^*$. 
It is shown in~\cite{vorexact} that identifying $c'^*$ amounts to determining a certain left/right relationship between $v$ and $v^c_j$. (Recall that $v^c_j$ is the artificial vertex 
embedded inside the face $c$ with an incoming 0-length edge from $y_j$.) We next make this notion more precise.

\begin{definition}
For a tree $T$ and two nodes $x$ and $y$, such that none is an ancestor of the other, we say that $x$ is {\em left of} $y$ if the preorder number of $x$ is smaller than the preorder number of $y$. Otherwise, we say that $x$ is {\em right of} $y$.
\end{definition}

It is shown in~\cite[Section 4]{vorexact} that given the left/right/ancestor relationship of $v$ and $v_j^c$, one can determine, in constant time, the child of $c^*$ in the centroid decomposition tree containing $e^*$. We call the procedure that does this \textsc{NextCentroid}. Having found the next centroid in Line~\ref{ps1:next}, \textsc{HandleCentroid} moves on to handle it recursively.  

\medskip
\alglanguage{pseudocode}
\begin{algorithm}[t]
\caption{\textsc{HandleCentroid}$(\VD^*, c^*, v)$}\label{alg:handle1}
 \textbf{Input:} The centroid decomposition of the dual representation $\VD^*$ of a Voronoi diagram, a centroid $c^*$ and a vertex $v$ that belongs to the Voronoi cell of some descendant of $c^*$ in the centroid decomposition of $\VD^*$.\\
 \textbf{Output:} The site $s$ of $\VD$ such that $v \in \Vor(s)$ and the additive distance from $s$ to $v$. 
\begin{algorithmic}[1]
\If { $c^*$ is a leaf}\label{ps1:basestart}
	\State $s_1,s_2 \leftarrow$ sites corresponding to $c^*$
    \For{$k=1,2$}
    	\State $d_k \leftarrow \weight(s_k)+d_H(s_k,v)$\label{ps1:dist1} 
    \EndFor
    \State $j \leftarrow \argmin_k(d_k)$
    \State \textbf{return} $(s_j,d_j)$
\EndIf\label{ps1:baseend}
\State $s_1,s_2,s_3 \leftarrow$ sites corresponding to $c^*$ \label{ps1:sites}
\For{$k=1,2,3$}
	\State $d_k \leftarrow \weight(s_k)+d_H(s_k,v)$ \label{ps1:dist2}
\EndFor
\State $j \leftarrow \argmin_k(d_k)$
\State $a \leftarrow$ left/right/ancestor relationship of $v$ to $v_j^c$ in the shortest path tree of $s_j$ in H \label{ps1:leftright}
\If{$v$ is an ancestor of $v_j^c$}
	\State \textbf{return} $(s_j,d_j)$
\Else
	\State $c \leftarrow \textsc{NextCentroid}(c,a)$ \label{ps1:next}
	\State \textbf{return} \textsc{HandleCentroid}$(\VD^*, c , v)$
\EndIf
\Statex
\end{algorithmic}
  \vspace{-0.4cm}
\end{algorithm}

The efficiency of procedure \textsc{HandleCentroid} depends on the time to compute distances in $H$ (Lines~\ref{ps1:dist1} and~\ref{ps1:dist2}) and the left/right/ancestor relationship (Line~\ref{ps1:leftright}).
Given an MSSP data structure for $H$, with sources $S$, each of these operations can be performed in time $\cO(\log|H|)$ and hence~\cref{thm:pointloc18} follows.

\iffull
\section{An $\cOtilde(n^{4/3})$-space $\cO(\log^2 n)$-query oracle}\label{sec:43}

We state the following result from~\cite{vorexact} that we use in the oracle presented in this section.

\begin{theorem}[\cite{vorexact}]\label{thm:vorvan}
For a planar graph $G$ of size $n$, there is an $\cO(n^{3/2})$-sized data structure that answers distance queries in time $\cO(\log n)$.
\end{theorem}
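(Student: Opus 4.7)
The plan is to combine an $r$-division with the Voronoi-based point location of Theorem~\ref{thm:pointloc18}. I would choose $r=\sqrt{n}$, producing $\cO(\sqrt{n})$ pieces, each of size $\cO(\sqrt{n})$ with $\cO(n^{1/4})$ boundary vertices lying on $\cO(1)$ holes. The oracle stores three kinds of information. For every piece $P$, tabulate all pairwise distances among the vertices of $P$; summed over the $\cO(\sqrt{n})$ pieces this takes $\cO(\sqrt{n}\cdot n)=\cO(n^{3/2})$ space. For every vertex $u$ of a piece $P$, also tabulate $d_G(u,b)$ for each $b\in\partial P$; the total is $\cO(n\cdot n^{1/4})$ and is subsumed. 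Finally, for every piece $P$ and every hole $h$ of $P$, store an MSSP data structure for $P^{h,out}$ with the vertices of $h$ as sources, at a cost of $\cO(n\log n)$ per piece, so $\cOtilde(n^{3/2})$ in total.

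A distance query $(u,v)$ is answered as follows. Let $P$ be a piece containing $u$. If $v\in P$, return the precomputed distance. Otherwise $v$ lies in $P^{h,out}$ for some hole $h$ of $P$ and can be reached from $u$ only through $\partial P\cap h$, contributing $\min_{b\in\partial P\cap h}\{d_G(u,b)+d_{P^{h,out}}(b,v)\}$, which is exactly the additively weighted distance from $v$ to its nearest site in the Voronoi diagram on $P^{h,out}$ whose sites are $\partial P\cap h$ with weights $\weight(b)=d_G(u,b)$. Since both the weights and the MSSP structure on $P^{h,out}$ are already stored, Theorem~\ref{thm:pointloc18} locates the nearest site and returns its distance. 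Taking the minimum over the $\cO(1)$ holes of $P$ produces $d_G(u,v)$.

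The principal obstacle is attaining query time $\cO(\log n)$. A black-box use of Theorem~\ref{thm:pointloc18} pays $\cO(\log^2 n)$, one logarithm for the depth of the centroid decomposition of $\VD^*$ and another per MSSP call inside \textsc{HandleCentroid}. To shave the extra logarithm one must exploit the fact that all $\cO(\log n)$ MSSP queries performed during a single point-location descent concern shortest-path trees rooted at sites on the same hole $h$ and evaluated at the same vertex $v$. The natural strategy is to arrange a fractional-cascading-style coupling between the centroid decomposition of $\VD^*$ and the persistent link-cut tree underlying MSSP, so that after an initial $\cO(\log n)$ search each subsequent level of the descent costs only $\cO(1)$ amortized; the ancestor and left/right comparisons required by \textsc{NextCentroid} must be threaded through the same coupling. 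Formalising this augmentation, and simultaneously trimming the $\cOtilde(n^{3/2})$ storage to the sharper $\cO(n^{3/2})$, is the key technical step that the construction of~\cite{vorexact} takes beyond the tools reviewed in Section~\ref{sec:prelim}.
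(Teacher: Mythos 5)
The paper does not prove this theorem; it is imported verbatim from \cite{vorexact}, so there is no internal proof to compare against. Evaluated on its own terms, your reconstruction is \emph{not} the construction of \cite{vorexact}, and it does not actually reach the stated bounds.

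The main conceptual issue is that you build Voronoi diagrams on the \emph{complements} $P^{h,out}$ of pieces. As the introduction of this very paper emphasizes, external Voronoi diagrams are one of the two ``new ingredients'' that the present paper adds on top of \cite{vorexact}; the oracle of \cite{vorexact} works with Voronoi diagrams on the pieces themselves (via a complete recursive decomposition, with sites $\partial P$ and additive weights $d_G(u,\cdot)$ for $u\in P$, which sums to $\cO(n^{3/2})$ over all pieces and all levels). So you have, in effect, reinvented a piece of the machinery of Section~\ref{sec:43} of this paper rather than of \cite{vorexact}. This matters quantitatively: the MSSP structure for $P^{h,out}$ is over a graph of size $\Theta(n)$, and with $\Theta(\sqrt{n})$ pieces this costs $\Theta(n^{3/2}\log n)$ space, overshooting the $\cO(n^{3/2})$ claim by a logarithmic factor. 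In \cite{vorexact}'s internal-VD recursion, by contrast, the MSSP structures are on the pieces and their total size is only $\cO(n\log^2 n)$. You also omit storing the centroid decompositions of $\VD^*$ needed by Theorem~\ref{thm:pointloc18}, although that cost is $\cO(n\cdot n^{1/4})$ and hence subsumed.

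For the query time, you correctly observe that a black-box use of Theorem~\ref{thm:pointloc18} gives $\cO(\log^2 n)$. However, the ``fractional-cascading-style coupling'' between the centroid decomposition and the persistent MSSP structure is pure speculation: you give no argument that the $\cO(\log n)$ MSSP queries issued during the descent can be made to share an $\cO(\log n)$ search, and the left/right/ancestor predicate used by \textsc{NextCentroid} depends on a different root $s_j$ at every level of the centroid descent, so the link-cut tree versions consulted are different persistent versions, which is exactly the situation fractional cascading does \emph{not} handle cleanly. In short, the reduction from $\cO(\log^2 n)$ to $\cO(\log n)$ is asserted, not proved, and it is not the mechanism used in \cite{vorexact}. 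What your construction cleanly establishes, with the details filled in, is an $\cOtilde(n^{3/2})$-space, $\cO(\log^2 n)$-query oracle --- a weaker statement than the one being cited.
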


For clarity of presentation, we first describe our oracle under the assumption that the boundary vertices of each piece $P$ in the $r$-division of the graph lie on a single hole and that each such hole is a simple cycle.
Multiple holes and non-simple cycles do not pose any significant complications; we explain how to treat pieces with multiple holes that are not necessarily simple cycles, separately.

\paragraph{Data Structure.} We obtain an $r$-division for $r=n^{2/3} \sqrt{\log n}$. The data structure consists of the following for each piece $P$ of the $r$-division:
\begin{enumerate}
\item The $\cO(|P|^{3/2})$-space, $\cO(\log |P|)$-query-time distance oracle of~\cref{thm:vorvan}. These occupy $\cO(n\sqrt{r})$ space in total.
\item Two MSSP data structures, one for $P$ and one for $\out{P}$, both with sources the nodes of $\partial P$.
The MSSP data structure for $P$ requires space $\cO(r \log r)$, while the one for $\out{P}$ requires space $\cO(n \log n)$.
The total space required for the MSSP data structures is $\cO(\frac{n^2}{r}\log n)$, since there are $\cO(\frac{n}{r})$ pieces.
\item For each node $u$ of $P$:
\begin{itemize}
\item $\VDin^*(u,P)$, the dual representation of the Voronoi diagram for $P$ with sites the nodes of $\partial P$, and additive weights the distances from $u$ to these nodes in $G$;
\item $\VDout^*(u,P)$, the dual representation of the Voronoi diagram for $\out{P}$ with sites the nodes of $\partial P$, and additive weights the distances from $u$ to these nodes in $G$.
\end{itemize}
The representation of each Voronoi diagram occupies $\cO(\sqrt{r})$ space and hence, since each vertex belongs to a constant number of pieces, all Voronoi diagrams require space $\cO(n \sqrt{r})$. 
\end{enumerate}

\paragraph{Query.}
We obtain a piece $P$ of the $r$-division that contains $u$.
Let us first suppose that $v \in P$. We have to consider both the case that the shortest $u$-to-$v$ path crosses $\partial P$ and the case that it does not. If it does cross, we retrieve this distance by performing a point location query for $v$ in the Voronoi diagram $\VDin(u,P)$. If the shortest $u$-to-$v$ path does not cross $\partial P$, the path lies entirely within $P$. We thus retrieve the distance by querying the exact distance oracle of~\cref{thm:vorvan} stored for $P$. The answer is the minimum of the two returned distances. This requires $\cO(\log^2 n)$ time by~\cref{thm:pointloc18,thm:vorvan}.
Else, $v \not\in P$ and the shortest path from $u$ to $v$ must cross $\partial P$. The answer can be thus obtained by a point location query for $v$ in the Voronoi diagram $\VDout(u,P)$ in time $\cO(\log^2 n)$ by~\cref{thm:pointloc18}. The pseudocode of the query algorithm is presented below as procedure \textsc{SimpleDist}$(u,v)$ (\cref{alg:dist1}).

\medskip
\alglanguage{pseudocode}
\begin{algorithm}[H]
\caption{\textsc{SimpleDist}$(u,v)$}\label{alg:dist1}
\textbf{Input:} Two nodes $u$ and $v$.\\
\textbf{Output:} $d_G(u,v)$.
\begin{algorithmic}[1]
\State $P \leftarrow$ a piece of the $r$-division containing $u$
\If{$v\in P$}
	\State $d_1 \leftarrow d_{P}(u,v)$
    \State $d_2 \leftarrow \textsc{PointLocate}(\VD_{in}^*(u,P),v)$
    \State \textbf{return} $\min(d_1,d_2)$
\Else
	\State \textbf{return} \textsc{PointLocate}$(\VD_{out}^*(u,P),v)$
\EndIf
\Statex
\end{algorithmic}
  \vspace{-0.4cm}
\end{algorithm}

\paragraph{Dealing with holes.}
The data structure has to be modified as follows.
\begin{enumerate}
\setcounter{enumi}{1}
\item For each hole $h$ of $P$, two MSSP data structures, one for $P$ and one for $P^{h,out}$, both with sources the nodes of $\partial P$  that lie on $h$.
\item For each node $u$ of $P$, for each hole $h$ of $P$:
\begin{itemize}
\item $\VDin^*(u,P,h)$, the dual representation of the Voronoi diagram for $P$ with sites the nodes of $\partial P$ that lie on $h$, and additive weights the distances from $u$ to these nodes in $G$;
\item $\VDout^*(u,P,h)$, the dual representation of the Voronoi diagram for $P^{h,out}$ with sites the nodes of $\partial P$ that lie on $h$, and additive weights the distances from $u$ to these nodes in $G$.
\end{itemize}
\end{enumerate}

As for the query, if $v \in P$ we have to perform a point location query in $\VDin(u,P,h)$ for each hole $h$ of $P$. Else $v \not\in P$ and we have to perform a point location query in $\VDout(u,P,h)$ for the hole $h$ of $P$ such that $v \in P^{h,out}$. We can afford to store the required information to identify this hole explicitly in balanced search trees.

We thus obtain the following result.

\begin{theorem}
For a planar graph $G$ of size $n$, there is an $\cO(n^{4/3}\sqrt{\log n})$-sized data structure that answers distance queries in time $\cO(\log^2 n)$.
\end{theorem}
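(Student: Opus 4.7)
The plan is to prove correctness of \textsc{SimpleDist}$(u,v)$ and then bound the total space and query time by straightforward summation over the $r$-division, with the two point-location calls charged via~\cref{thm:pointloc18} and the within-piece oracle charged via~\cref{thm:vorvan}. For correctness, I would case-split on whether $v \in P$ (where $P$ is the piece chosen for $u$) and on whether the shortest $u$-to-$v$ path crosses $\partial P$. If $v \in P$ and the shortest path stays inside $P$, then $d_G(u,v) = d_P(u,v)$ is returned by the \cref{thm:vorvan} oracle built for $P$. If $v \in P$ but the path exits $P$, let $s \in \partial P$ be its last $\partial P$-vertex; then the suffix from $s$ to $v$ is a shortest path in $P$, so $d_G(u,v) = d_G(u,s) + d_P(s,v) = \min_{s' \in \partial P}\bigl(d_G(u,s') + d_P(s',v)\bigr)$, which is exactly the additively weighted distance returned by \textsc{PointLocate}$(\VDin^*(u,P),v)$. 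If $v \notin P$, the shortest path must cross $\partial P$; taking $s$ to be its last $\partial P$-vertex, the suffix $s$-to-$v$ lies in $\out{P}$, so $d_G(u,v)$ is returned by \textsc{PointLocate}$(\VDout^*(u,P),v)$ by the analogous argument. Taking the minimum in the first two cases (and just the single call in the third) yields $d_G(u,v)$.

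For the space, I would sum three contributions: the \cref{thm:vorvan} oracles contribute $\cO(n/r) \cdot \cO(r^{3/2}) = \cO(n\sqrt{r})$; the MSSPs contribute $\cO((n/r) \cdot n\log n) = \cO(n^2 \log n/r)$ (the $\out{P}$ structures dominate the $P$ structures); and the Voronoi diagrams contribute $\cO(n\sqrt{r})$, since each vertex lies in a constant number of pieces and each diagram has complexity proportional to its number of sites, namely $\cO(\sqrt{r})$. Summing gives $\cO(n\sqrt{r} + n^2 \log n/r)$, and plugging in $r = n^{2/3}\sqrt{\log n}$ makes both terms $\cO(n^{4/3}\sqrt{\log n})$, as claimed. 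The query time falls out immediately: the \cref{thm:vorvan} oracle answers in $\cO(\log n)$ and each \textsc{PointLocate} call costs $\cO(\log^2 n)$ by~\cref{thm:pointloc18}, for $\cO(\log^2 n)$ in total.

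The hole-handling at the end of the section preserves all of these bounds, since each piece has a constant number of holes; multiplying the counts of MSSPs and Voronoi diagrams by that constant does not change the asymptotics, and identifying the correct hole $h$ with $v \in P^{h,out}$ in the $v \notin P$ case via a balanced search tree on the hole-interiors of $P$ adds only $\cO(\log n)$ to the query and remains well within budget for the space. I do not expect a substantive obstacle here; the one point requiring care is the ``last $\partial P$-vertex'' decomposition, which relies on shortest paths being unique (ensured by the perturbation mentioned in~\cref{sec:prelim}) so that the suffix after the last crossing is itself a shortest path in the appropriate subgraph, making the additively weighted minimization over $\partial P$ coincide exactly with $d_G(u,v)$.
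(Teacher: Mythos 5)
Your proposal matches the paper's own argument essentially line for line: same $r$-division with $r = n^{2/3}\sqrt{\log n}$, same three storage components (the $\cO(r^{3/2})$ per-piece oracle, the two MSSPs for $P$ and $\out P$, and the $\VDin^*/\VDout^*$ diagrams per vertex), the same case split in \textsc{SimpleDist}, and the same space accounting balancing $\cO(n\sqrt{r})$ against $\cO(n^2\log n / r)$. The only difference is that you spell out the last-boundary-vertex decomposition underlying the correctness of the point-location calls a bit more explicitly than the paper does, which is fine.
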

\fi

\section{An oracle with almost optimal tradeoffs}\label{sec:onehole}
\iffull
\else
In this section we describe how to obtain an oracle with almost optimal space to query-time tradeoffs. 
Unlike the oracles in~\cite{DBLP:conf/focs/Cohen-AddadDW17,vorexact}, we would like to use Voronoi diagrams not just for every piece $P$ of an $r$-division, but also for its complement $\out P$. 
The size of the representation we store for each Voronoi diagram is proportional to the number of sites, which is $|\partial P|$, regardless of whether the diagram is for $P$ or for $\out P$.
However, to answer point location queries we also need to store the MSSP data structures.  
The problem is that the size of an MSSP data structure is roughly proportional to the size of the graph in which the Voronoi diagram is defined. For $\out P$ this is $\Omega(n)$. Thus if we want to keep the space small, we cannot afford to store the MSSP data structures for many external pieces. In Appendix~\ref{sec:43} we describe a simple oracle that uses external Voronoi diagrams and suffers from this limitation, but still improves the state-of-the art.  
Understanding this simple oracle is not required for understanding the main result, but can serve as a good warmup. 

To overcome the infeasibility of storing the MSSP data structures for external pieces, we use recursion.
\fi
\iffull
In this section we describe how to obtain an oracle with almost optimal space to query-time tradeoffs. 
Consider the oracle in the previous section. The size of the representation we store for each Voronoi diagram is proportional to the number of sites, while the size of an MSSP data structure is roughly proportional to the size of the graph in which the Voronoi diagram is defined. 
Thus, the MSSP data structures that we store for the outside of pieces of the $r$-division are the reason that the oracle in the previous section requires $\Omega(n^{4/3})$ space. However, storing the Voronoi diagrams for the outside of each piece is not a problem.
For instance, if we could somehow afford to store MSSP data structures for $P$ and $\out{P}$ of each piece $P$ of an $n^\epsilon$-division using just $\cO(n^\epsilon)$ space, then plugging $r=n^\epsilon$ into the data structure from the previous sections would yield an oracle with space $\cO(n^{1+\epsilon})$ and query-time $\cO(\log^2 n)$.

We cannot hope to have such a compact MSSP representation. However, 
we can use recursion to get around this difficulty. 
\fi
We compute a recursive $r$-division, represented by a recursive $r$-division tree $\TG$. 
We store, for each piece $P \in \TG$, the Voronoi diagram for $\out{P}$. 
However, instead of storing the costly MSSP for the entire $\out{P}$, we store the MSSP data structure (and some additional information) just for the portion of $\out{P}$ that belongs to the parent $Q$ of $P$ in $\TG$. 
Roughly speaking, when we need to perform point location on a vertex of $\out{P}$ that belongs to $Q$ we can use this MSSP information.
When we need to perform point location on a vertex of $\out{P}$ that does not belong to $Q$ (i.e., it is also in $\out Q$), we recursively invoke the point location mechanism for $\out{Q}$. 

We next describe the details. For clarity of presentation, we assume that the boundary vertices of each piece $P$ in $\TG$ lie on a single hole which is a simple cycle.
We later explain how to remove these assumptions.
In what follows, if a vertex in a Voronoi diagram can be assigned to more than one Voronoi cell, we assign it to the Voronoi cell of the site with largest additive weight. In other words, since we define the additive weights as distances from some vertex $u$, and since shortest paths are unique, we assign each vertex $v$ to the Voronoi cell of the last site on the shortest $u$-to-$v$ path. 
In particular, this implies that there are no empty Voronoi cells as every site belongs to its own cell. Thus $\VD^*$ is a ternary tree (see~\cite{vorexact}). We can make such an assignment by perturbing the additive weights to slightly favor sites with larger distances from $u$ at the time that the Voronoi diagram is constructed.

\paragraph{The data structure.}
Consider a recursive $(r_{m},\ldots,r_1)$-division of $G$ for some $n=r_{m}> \cdots > r_1=\cO(1)$ to be specified later.
Recall that our convention is that the $r_m$-division consists of $G$ itself.
For convenience, we define each vertex $v$ to be a boundary vertex of a singleton piece at level~$0$ of the recursive division. Denote the set of pieces of the $r_i$-division by $\mathcal{R}_i$. Let $\TG$ denote the tree representing this recursive $(r_m,\ldots,r_0)$-division (each singleton piece $v$ at level $0$ is attached as the child of an arbitrary piece $P$ at level $1$ such that $v \in P$).

We will handle distance queries between vertices $u,v$ that have the same parent $P$ in $\mathcal{R}_{1}$ separately by storing these distances explicitly (this takes $\cO(n)$ space because pieces at level $1$ have constant size).

\renewcommand{\VDin}{\textsf{VD}_{near}}
\renewcommand{\VDout}{\textsf{VD}_{far}}

The oracle consists of the following for each $0 \leq i \leq m-1$, for each piece $R \in \mathcal{R}_i$ whose parent in $\TG$ is $Q \in \mathcal{R}_{i+1}$:
\begin{enumerate}
\item If $i>0$, two MSSP data structures for $Q \setminus (R \setminus \partial R)$, with sources $\partial R$ and $\partial Q$, respectively.
\item If $i<m-1$, for each boundary vertex $u$ of $R$:
\begin{itemize}[leftmargin=3pt]
\item $\VDin^*(u,Q)$, the dual representation of the Voronoi diagram for $Q \setminus (R \setminus \partial R)$ with sites the nodes of $\partial Q$, and additive weights the distances from $u$ to these nodes in $\out{R}$;
\item $\VDout^*(u,Q)$, the dual representation of the Voronoi diagram for $\out{Q}$ with sites the nodes of $\partial Q$, and additive weights the distances from $u$ to these nodes in $\out{R}$;
\item if $i>0$, the coarse tree $T_u^R$, which is the tree obtained from the shortest path tree rooted at $u$ in $\out{R}$ (the fine tree) by  contracting any edge incident to a vertex not in $\partial Q \cup \partial R$. 
Note that the left/right/ancestor relationship between vertices of $\partial Q \cup \partial R$ in the coarse tree is the same as in the fine tree.
Also note that each (coarse) edge in $T_u^R$ originates from a contracted path in the fine tree. 
We preprocess the coarse tree $T_u^R$ in time proportional to its size to allow for $\cO(1)$-time lowest common ancestor (LCA)\footnote{An LCA query takes as input two nodes of a rooted tree and returns the deepest node of the tree that is an ancestor of both} and level ancestor\footnote{A level ancestor query takes as input a node $v$ at depth $d$ and an integer $\ell\leq d$ and returns the ancestor of $v$ that is at depth $\ell$.} queries~\cite{DBLP:conf/latin/BenderF00,DBLP:journals/tcs/BenderF04}. 
In addition, for every (coarse) edge of $T_u^R$, we store the first and last edges of the underlying path in the fine tree. We also store the preorder numbers of the vertices of $T_u^R$.
\end{itemize}
\end{enumerate}

\noindent \textbf{Space.} The space required to store the described data structure is $\cO(n \sum_i\frac{r_{i+1}}{r_i} \log r_{i+1})$.
Part 1 of the data structure: At the $r_i$-division, we have $\cO(\frac{n}{r_i})$ pieces and for each of them, we store two MSSP data structures, each of size $\cO(r_{i+1} \log r_{i+1})$.
Thus, the total space required for the MSSP data structures is $\cO(n \sum_i\frac{r_{i+1}}{r_i} \log r_{i+1})$.
Part 2 of the data structure: At the $r_i$-division, we store, for each of the $\cO(\frac{n}{\sqrt{r_i}})$ boundary nodes, the representation of two Voronoi diagrams and a coarse tree, each of size $\cO(\sqrt{r_{i+1}})$. The space for this part is thus 
$\cO(n \sum_i ({\frac{r_{i+1}}{r_i}})^{1/2})$.

\paragraph{Query.}

Upon a query $d_G(u,v)$ for the distance between vertices $u$ and $v$ in $G$,
we pick the singleton piece $Q_0 =\{u\} \in \mathcal{R}_0$ and call the  
procedure \textsc{Dist}$(u,v,Q_0)$ presented below. In what follows we denote by $Q_j$ the ancestor of $Q_0$ in $\TG$ that is in $\mathcal{R}_j$. 

The procedure \textsc{Dist}$(u,v,Q_i)$ (\cref{alg:dist2}) gets as input a piece $Q_i$, a source vertex $u\in \partial Q_i$ and a destination vertex $v \in \out{Q_i}$.
The output of \textsc{Dist}$(u,v,Q_i)$ is a tuple $(\beta,d)$, where $d$ is the distance from $u$ to $v$ in $\out{Q_i}$ and $\beta$ is a sequence of certain boundary vertices of the recursive division that occur along the recursive query. 
(Note that $\out{Q_0}=G$ and thus \textsc{Dist}$(u,v,Q_0)$ returns $d_G(u,v)$.)
Let~$i^*$ be the smallest index such that $v \in Q_{i^*}$. For every $1 \leq i < i^*$, the $i$'th element in the list $\beta$ is the last boundary vertex on the shortest $u$-to-$v$ path that belongs to $\partial Q_i$.
As we shall see, the list $\beta$ enables us to obtain the information that enables us to determine the required left/right/ancestor relationships during the recursion.
\textsc{Dist}$(u,v,Q_i)$ works as follows.
\begin{enumerate}
\item If $v \in Q_{i+1} \setminus Q_i$, it computes the required distance by considering the minimum of the distance from $u$ to $v$ returned by a query in the MSSP data structure for $Q_{i+1}\setminus(Q_i\setminus \partial Q_i)$ with sources $\partial Q_i$ and the distance returned by a (vanilla) point location query for $v$ in $\VDin(u, Q_{i+1})$ using the procedure \textsc{PointLocate}. The first query covers the case where the shortest path from $u$ to $v$ lies entirely within $Q_{i+1}$, while the second one covers the complementary case. The time required in this case is $\cO(\log^2 n)$.
\item If $v \not\in Q_{i+1}$, \textsc{Dist} performs a recursive point location query on $\VDout(u, Q_{i+1})$ by calling the procedure \textsc{ModifiedPointLocate}. 
\end{enumerate}

\medskip
\alglanguage{pseudocode}
\begin{algorithm}[h]
\caption{\textsc{Dist}$(u,v,Q_i)$}\label{alg:dist2}
\textbf{Input:} A piece $Q_i$, a vertex $u \in \partial Q_i$ and a vertex $v \in \out{Q_i}$ ($v$ may belong to $Q_i$ if $i=m$).\\
\textbf{Output:} $(s,d)$, where $s$ is a list of (some) boundary vertices on the $u$-to-$v$ shortest path in $\out{Q_i}$, and $d$ is the $u$-to-$v$ distance in $\out{Q_i}$. 
\begin{algorithmic}[1]
\If{$v\in Q_{i+1}$}\label{psdist2:start1} 
	\State $(\beta,d) \leftarrow  (null,$ distance returned by querying the MSSP data structure for $Q_{i+1}\setminus(Q_i\setminus \partial Q))$
	\State $(\beta',d') \leftarrow \textsc{PointLocate}(\VDin^*(u,Q_{i+1}),v)$ \label{psdist2:near}
	\If{$d'<d$}
		\State $(\beta,d) \leftarrow (\beta',d')$
	\EndIf
\ElsIf{$i=m-2$}
	\State $(\beta,d) \leftarrow \textsc{PointLocate}(\VDout^*(u,Q_{m-1}),v)$\label{psdist2:end1}
\Else
	\State $(\beta,d) \leftarrow \textsc{ModifiedPointLocate}(\VDout^*(u,Q_{i+1}),v,i+1)$\label{psdist2:modpointloc} 
\EndIf
\State \textbf{return} $(\beta,d)$
\Statex
\end{algorithmic}
  \vspace{-0.4cm}
\end{algorithm}

Since \textsc{Dist} returns a list of boundary vertices as well as the distance, \textsc{PointLocate} and \textsc{HandleCentroid} must pass and augment these lists as well. The pseudocode for \textsc{ModifiedPointLocate} is identical to that of \textsc{PointLocate}, except it calls \textsc{ModifiedHandleCentroid} instead of \textsc{HandleCentroid}; see~\cref{alg:pointloc2}.
In what follows, when any of the three procedures is called with respect to a piece $Q_i$, we refer to this as an invocation of this procedure at level $i$.

\medskip
\alglanguage{pseudocode}
\begin{algorithm}[t]
\caption{\textsc{ModifiedPointLocate}$(\VD^*, v, i)$}\label{alg:pointloc2}
\textbf{Input:} The centroid decomposition of the dual representation $\VD^*$ of a Voronoi diagram $\VD$ for $\out{Q_i}$ with sites $\partial Q_i$ and a vertex $v \in \out{Q_i}$.\\
\textbf{Output:} The site $s$ of $\VD$ such that $v \in \Vor(s)$ and the additive distance from $s$ to $v$. 
\begin{algorithmic}[1]
    \State $c^* \leftarrow$ root centroid in the centroid decomposition of $\VD^*$
\State \textbf{return} {\color{red}\textsc{ModifiedHandleCentroid}$(\VD^*,c^*,v, i)$}
\Statex
\end{algorithmic}
  \vspace{-0.4cm}%
\end{algorithm}

The pseudocode of \textsc{ModifiedHandleCentroid} (\cref{alg:handle2}) is similar to that of the procedure $\textsc{HandleCentroid}$, except that when the site $s$ such that $v \in \Vor(s)$ is found, $s$ is prepended to the list. 
A more significant change in \textsc{ModifiedHandleCentroid} stems from the fact that, since we no longer have an MSSP data structure for all of $\out{Q_i}$, 
we use recursive calls to \textsc{Dist} to obtain the distances from the sites $s_k$ to $v$ in $\out{Q_{i}}$ in Lines~\ref{ps12:dist1} and~\ref{ps12:dist2}.
We highlight these changes in red in the pseudocode provided below (\cref{alg:handle2}).
We next discuss how to determine the required left/right/ancestor relationships in $\VDout$ (Line~\ref{ps12:leftright}) in the absence of MSSP information for the entire $\out{Q_i}$.

\medskip
\alglanguage{pseudocode}
\begin{algorithm}[h]
\caption{\textsc{ModifiedHandleCentroid}$(\VD^*, c^*, v, i)$}\label{alg:handle2}
\textbf{Input:} The centroid decomposition of the dual representation $\VD^*$ of a Voronoi diagram $\VD$ for $\out{Q_i}$ with sites $\partial Q_i$, a centroid $c^*$ and a vertex $v \in \out{Q_i}$ that belongs to the Voronoi cell of some descendant of $c^*$ in the centroid decomposition of $\VD^*$.\\
\textbf{Output:} The site $s$ of $\VD$ such that $v \in \Vor(s)$ and the additive distance from $s$ to $v$. 
\begin{algorithmic}[1]
\If { $c^*$ is a leaf}
	\State $s_1,s_2 \leftarrow$ sites corresponding to $c^*$
    \For{$k=1,2$}
    	\State $({\color{red}\beta_k},d_k) \leftarrow {\color{red}\textsc{Dist}}(s_k,v,{\color{red}Q_{i}})$\label{ps12:dist1} 
    \EndFor
    \State {\color{black}$j \leftarrow \argmin_k(\weight(s_k)+ d_k)$}
    \State \textbf{return} $({\color{red}prepend(s_j,\beta_j)},\weight(s_j)+d_j)$ \label{ps12:prepend1}
\EndIf
\State $s_1,s_2,s_3 \leftarrow$ sites corresponding to $c^*$ \label{ps12:3sitesstart}
\For{$k=1,2,3$}
	\State $({\color{red}\beta_k},d_k) \leftarrow {\color{red}\textsc{Dist}}(s_k,v,{\color{red}Q_{i}})$
	\label{ps12:dist2}
\EndFor
\State {\color{black}$j \leftarrow \argmin_k(\weight(s_k)+d_k)$}\label{ps12:3sitesend}
\State $a \leftarrow$ left/right/ancestor relationship of $v$ to $v_j^c$ in the shortest path tree of $s_j$ in $\out{Q_i}$ \label{ps12:leftright}
\If{$v$ is an ancestor of $v_j^c$}
	\State \textbf{return} $({\color{red}prepend(s_j,\beta_j)},\weight(s_j)+d_j)$ \label{ps12:prepend2}
\Else
	\State $c \leftarrow \textsc{NextCentroid}(c,a)$
	\State \textbf{return} {\color{red}\textsc{ModifiedHandleCentroid}}$(\VD^*, c , v, {\color{red}i})$
\EndIf
\Statex
\end{algorithmic}
  \vspace{-0.4cm}%
\end{algorithm}

\paragraph{Left/right/ancestor relationships in $\VDout$.}
Let $s_j$ be the site among the three sites corresponding to a centroid $c^*$ such that $v$ is closest to $s_j$ with respect to the additive distances (see Lines~\ref{ps12:3sitesstart}-\ref{ps12:3sitesend} of~\cref{alg:handle2}). Recall that $y_j$ is the vertex of the centroid face $c$ that belongs to $\Vor(s_j)$ and that $v_j^c$ is the artificial vertex connected to $y_j$ and embedded inside the face $c$.
In Line~\ref{ps12:leftright} of \textsc{ModifiedHandleCentroid} we have to determine whether $v$ is an ancestor of $y_j$ in the shortest path tree rooted at $s_j$ in $\out{Q_{i}}$, and if not, whether the $s_j$-to-$v$ path is left or right of the $s_j$-to-$v_j^c$ path.
To avoid clutter, we will omit the subscript $j$ in the following, and refer to $s_j$ as $s$, to $v_j^c$ as $v^c$, and to the sequence of boundary vertices $\beta_j$ returned by the recursion as $\beta$. 
To infer the relationship between the two paths we use the sites (boundary vertices) stored in the list $\beta$. We prepend $s$ to $\beta$, and, if $v$ is not already the last element of $\beta$, we append $v$ to $\beta$ and use a flag to denote that $v$ was appended.

To be able to compare the $s$-to-$v$ path with the $s$-to-$v^c$ path, we perform another recursive call \textsc{Dist}$(s,v^c,Q_i)$ (this call is implicit in Line~\ref{ps12:leftright}). 
Let $\gamma$ be the list of sites returned by this call. As above, we prepend $s$ to $\gamma$, and append $v^c$ if it is not already the last element. 
The intuition is that the lists $\beta$ and $\gamma$ are a coarse representation of the $s$-to-$v$ and $s$-to-$v^c$ shortest paths, and that we can use this coarse representation to roughly locate the vertex where the two paths diverge. 
The left/right relationship between the two paths is determined solely by the left/right relationship between the two paths at that vertex (the divergence point).
We can use the local coarse tree information or the local MSSP information to infer this  relationship. 

Recall that $s$ is a boundary vertex of the piece $Q_i$ at level $i$ of the recursive division. More generally, for any $k \geq 0$, $\beta[k]$ is a boundary vertex of $Q_{i+k}$ (except, possibly, when $\beta[k]=v$). 
To avoid this shift between the index of a site in the list and the level of the corresponding piece in the recursive $r$-division, we prepend $i$ empty elements to both lists, so that now, for any $k \geq i$, $\beta[k]$ is a boundary vertex of $Q_{k}$. 
Let $k$ be the largest integer such that $\beta[k] = \gamma[k]$. Note that $k$ exists because $s$ is the first vertex in both $\beta$ and $\gamma$.

\begin{observation}
The restriction of the shortest path from $\beta[i]$ to $\beta[i+1]$ in $\out{Q_i}$ to the nodes of $\partial Q_i \cup \partial Q_{i+1}$ is identical to the path from $\beta[i]$ to $\beta[i+1]$ in $T^{Q_i}_{\beta[i]}$.
\end{observation}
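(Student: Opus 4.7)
The plan is to derive this observation directly from the definition of the coarse tree; it is essentially a book-keeping lemma. Recall that $T^{Q_i}_{\beta[i]}$ is obtained from the shortest path tree $T$ rooted at $\beta[i]$ in $\out{Q_i}$ by contracting every edge that is incident to a vertex outside $\partial Q_i \cup \partial Q_{i+1}$. Since $Q_i$ is a child of $Q_{i+1}$ in $\TG$, this is precisely the coarse tree defined for the pair $(R,Q)=(Q_i,Q_{i+1})$ in Part 2 of the data structure.

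I would first argue that the shortest path from $\beta[i]$ to $\beta[i+1]$ in $\out{Q_i}$ sits inside $T$. Both endpoints belong to $\out{Q_i}$: $\beta[i]\in\partial Q_i$ by construction, and $\beta[i+1]\in\partial Q_{i+1}$ lies on the shortest $u$-to-$v$ path in $\out{Q_i}$ that was used to populate $\beta$, so it too is a vertex of $\out{Q_i}$. By our standing uniqueness assumption on shortest paths, the shortest $\beta[i]$-to-$\beta[i+1]$ path in $\out{Q_i}$ is unique and therefore coincides with the unique root-to-$\beta[i+1]$ path $\pi$ in $T$.

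Next I would examine what the contraction does to $\pi$. The contraction of an edge $(x,y)$ of $T$ with, say, $y \notin \partial Q_i \cup \partial Q_{i+1}$ deletes $y$ from $\pi$ while preserving the sequence of remaining vertices and the ancestor/descendant relation. Iterating this contraction until no vertex outside $\partial Q_i \cup \partial Q_{i+1}$ remains therefore transforms $\pi$ into exactly the subsequence of its vertices that lie in $\partial Q_i \cup \partial Q_{i+1}$. Because $\beta[i]$ and $\beta[i+1]$ are themselves in $\partial Q_i \cup \partial Q_{i+1}$, they are not removed, so this subsequence is a valid root-to-$\beta[i+1]$ path in $T^{Q_i}_{\beta[i]}$; by uniqueness of tree paths it is the path from $\beta[i]$ to $\beta[i+1]$ in the coarse tree.

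The only mild point to check is that the edge contractions commute, so that the resulting tree and the resulting path are independent of the order in which contractions are performed; this is standard. There is no real obstacle in the argument — the content is entirely that the coarse tree faithfully records ancestor/descendant information among the retained vertices of the shortest path tree, which is exactly what the observation asserts for this particular pair of retained vertices.
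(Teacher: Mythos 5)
Your proof is correct and matches the paper's intent: the paper states this as an unproved \emph{Observation} following immediately from the definition of the coarse tree, and your argument fills in exactly the straightforward justification (uniqueness of shortest paths places the $\beta[i]$-to-$\beta[i+1]$ shortest path on the root-to-$\beta[i+1]$ path of the fine tree, and the defining contractions reduce any tree path to the subsequence of its retained vertices). The one caveat worth stating explicitly is that the claim is used only when $\beta[i+1]$ is a genuine boundary vertex of $\partial Q_{i+1}$ (not the manually appended $v$), which you implicitly assume when you place $\beta[i+1] \in \partial Q_{i+1}$.
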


We next analyze the different cases that might arise  and describe how to correctly infer the left/right relationship in each case. An illustration is provided as~\cref{fig:cases}.

\begin{enumerate}
\item $(\beta[k+1]=v \notin \partial Q_{k+1})$ or $(\gamma[k+1]=c \notin \partial Q_{k+1})$. This corresponds to the case that in at least one of the lists there are no boundary vertices after the divergence point. We have a few cases depending on whether $\beta[k+1]$ and $\gamma[k+1]$ are in $Q_{k}$ or not.
\begin{enumerate}[label={\alph*}.]
\item If $\beta[k+1],\gamma[k+1] \in Q_k$, then it must be that $\beta[k+1]=v$ and $\gamma[k+1]=c$ were appended to the sequences manually. In this case we can query the MSSP data structure for $Q_{k} \setminus (Q_{k-1} \setminus \partial Q_{k-1})$ with sources the boundary vertices of $\partial Q_k$ to determine the left/right/ancestor relationship.
\item If $\beta[k+1],\gamma[k+1] \not\in Q_k$, we can use the MSSP data structure stored for $Q_{k+1} \setminus (Q_{k} \setminus \partial Q_{k})$ with sources the nodes of $\partial Q_{k}$. Let us assume without loss of generality that $\beta[k+1]=v \notin \partial Q_{k+1}$. We then define $w$ to be $\gamma[k+1]$ if $\gamma[k+1]=c \not\in \partial Q_{k+1}$ or the child $w$ of the root $\beta[k]$ of the coarse $T^{Q_k}_{\beta[k]}$ that is an ancestor of $\gamma[k+1]$ otherwise. (In the latter case we can compute $w$ in $\cO(1)$ time with a level ancestor query.) We then query the MSSP data structure for the relation between $v$ and the $w$. Note that the relation between the $s$-to-$v$ and $s$-to-$c$ paths is the same as the relation between the $s$-to-$v$ and $s$-to-$w$ paths. 
\item Else, one of $\beta[k+1],\gamma[k+1]$ is in $Q_k$ and the other is not. Let us assume without loss of generality that $\beta[k+1]=v \in Q_k$. We can infer the left/right relation by looking at the circular order of the following edges: (i) the last fine edge in the root-to-$\beta[k]$ path in $T^{Q_{k-1}}_{\beta[k-1]}$, (ii) the first edge in the $\beta[k]$-to-$\beta[k+1]$ path, and (iii) the first edge in the $\beta[k]$-to-$w$ path, where $w$ is defined as in Case 1(b). Edge (i) is stored in $T^{Q_{k-1}}_{\beta[k-1]}$ ---note that $\beta[k-1]$ exists since $\beta[k+1] \in Q_k$. 
Edge (ii) can be retrieved from the MSSP data structure for $Q_k \setminus (Q_{k-1} \setminus \partial Q_{k-1})$ with sources the boundary vertices of $\partial Q_k$, and edge (iii) can be retrieved from the MSSP data structure for $Q_{k+1} \setminus (Q_{k} \setminus \partial Q_{k})$ with sources the boundary vertices of $\partial Q_k$.
\end{enumerate}
\item $\beta[k+1],\gamma[k+1] \in \partial Q_{k+1}$.
We first compute the LCA $x$ of $\beta[k+1]$ and $\gamma[k+1]$ in $T^{Q_k}_{\beta[k]}$.
\begin{enumerate}[label={\alph*}.]
\item If neither of $\beta[k+1],\gamma[k+1]$ is an ancestor of the other in $T^{Q_k}_{\beta[k]}$ we are done by utilising the preorder numbers stored in $T^{Q_k}_{\beta[k]}$.
\item Else, $x$ is one of $\beta[k+1],\gamma[k+1]$.
We can assume without loss of generality that $x=\beta[k+1]$. 
Using a level ancestor query we find the child $z$ of $\beta[k+1]$ in $T^{Q_{k}}_{\beta[k]}$ that is a (not necessarily strict) ancestor of $\gamma[k+1]$. 
The $x$-to-$z$ shortest path in $\out{Q_k}$ is internally disjoint from $\partial Q_{k+1}$; i.e.~it starts and ends at $\partial Q_{k+1}$, but either entirely lies inside $Q_{k+1}$ or entirely outside $Q_{k+1}$.
If $\beta[k+2]=v\notin \partial Q_{k+2}$ let $w$ be $v$; otherwise let $w$ be the child of the root in $T^{Q_{k+1}}_{\beta[k+1]}$ that is an ancestor of $\beta[k+2]$. The $x$-to-$w$ shortest path also lies either entirely inside or entirely outside $Q_{k+1}$. It remains to determine the relationship of the $x$-to-$w$ and the $x$-to-$z$ paths. 
\begin{enumerate}[label={\roman*}.]
\item If the shortest $x$-to-$z$ path and the shortest $x$-to-$w$ path lie on different sides of $\partial Q_{k+1}$, we can infer the left/right relation by looking at the circular order of the following edges: (i) the last edge on the coarse edge from $\beta[k]$ to $x$ in $T^{Q_k}_{\beta[k]}$, (ii) the first edge on the $x$-to-$z$ shortest path in $\out{Q_{k}}$, and (iii) the first edge on the $x$-to-$w$ shortest path in $\out{Q_{k}}$. Edges (ii) and (iii) can be retrieved from the MSSP data structure.
\item Else, the $x$-to-$z$ and the $x$-to-$w$ shortest paths lie on the same side, and we compute the required relation using the MSSP data structure for this side, with sources the sites of $\partial Q_{k+1}$.
\end{enumerate}
\end{enumerate}
\end{enumerate}

\begin{figure}
\centering
\subfloat[Case 1.a.]{
{\includegraphics[width=0.4\textwidth]{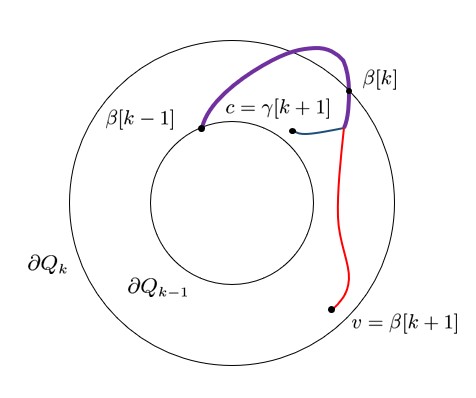}}
}
\subfloat[Case 1.b.]{
{\includegraphics[width=0.4\textwidth]{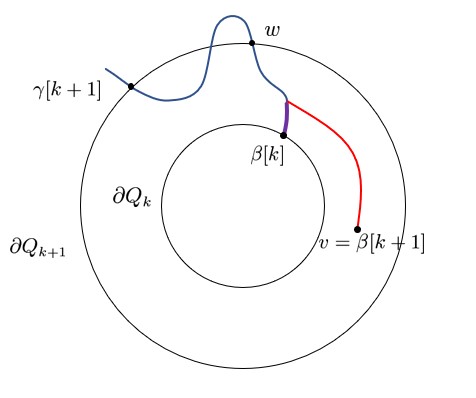}}
}
\newline
\subfloat[Case 1.c.]{
{\includegraphics[width=0.44\textwidth]{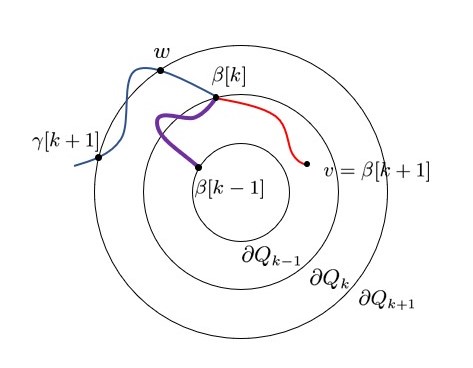}}
}
\subfloat[Case 2.a.]{
{\includegraphics[width=0.4\textwidth]{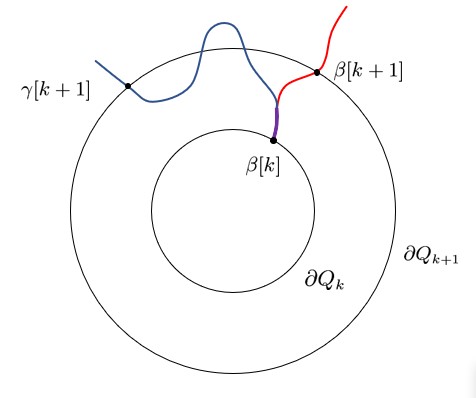}}
}
\newline
\subfloat[Case 2.b.i.]{
{\includegraphics[width=0.44\textwidth]{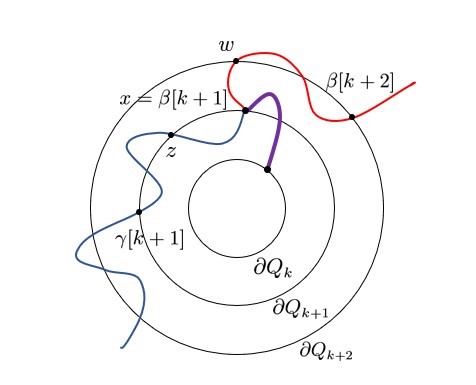}}
}
\subfloat[Case 2.b.ii]{
{\includegraphics[width=0.4\textwidth]{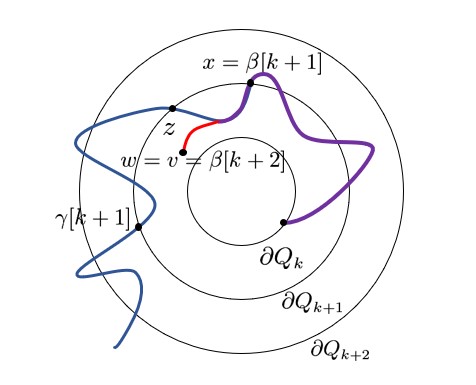}}
}
\caption{An illustration of the different cases that arise in determining the left/right/ancestor relationship of $\beta[k+1]$ and $\gamma[k+1]$ in the shortest path tree rooted at $s$ in $\out{R}$. The $s$-to-$v$ path is red, the $s$-to-$c$ path is blue, and their common prefix is purple.}\label{fig:cases}
\end{figure}

\textbf{Query Correctness.} 
Let us consider an invocation of \textsc{Dist} at level $i$. We first show that if such a query is answered without invoking \textsc{ModifiedPointLocate} (\cref{psdist2:modpointloc} of~\cref{alg:dist2}) then the distance it returns is correct. Indeed, let us recall that the input to \textsc{Dist} is a piece $Q_i$, a vertex $u \in \partial Q_i$ and a vertex $v \in \out{Q_i}$ and the desired output is the $u$-to-$v$ distance in $\out{Q_i}$ (along with some vertices on the $u$-to-$v$ shortest path in $\out{Q_i}$). 
Let us now look at the cases in which \textsc{Dist} does not call \textsc{ModifiedPointLocate} (Lines \ref{psdist2:start1}-\ref{psdist2:end1}). 
If $v \in Q_{i+1}$ then we have two cases: either the shortest $u$-to-$v$ path in $\out{Q_i}$ crosses $\partial Q_{i+1}$ or it does not. In the first case the answer is obtained by a point location query in the graph $Q_{i+1} \setminus (Q_i \setminus \partial Q_i)$ with sites the vertices of $\partial Q_{i+1}$ and additive weights the distances from $u$ to these vertices in $\out{Q_i}$, while in the second case by a query on the MSSP data structure for this graph with sources the vertices of $\partial Q_i$.
Else, if $v \not\in Q_{i+1}$ and $i=m-2$, the shortest path must cross $\partial Q_{i+1}$. We can thus obtain the distance by performing a point location query in the Voronoi diagram for $\out{Q_{m-1}}$ with sites the vertices of $\partial Q_{i+1}$ and additive weights the distances to these vertices from $u$ in $\out{Q_{m-2}}$.
Note that in both the cases that we perform point location queries, we have MSSP information for the underlying graph on which the Voronoi diagram is defined and hence the correctness of the answer follows from the correctness of the point location mechanism designed in~\cite{vorexact}.

We now observe that if invocations of \textsc{Dist} at level $i+1$ return the correct answer, then so 
does the invocation of \textsc{ModifiedPointLocate} at line~\ref{psdist2:modpointloc} of the invocation of \textsc{Dist} at level $i$,
since the only other requirement for the correctness of \textsc{ModifiedPointLocate} is is the ability to resolve the left/right/ancestor relationship; this follows from inspecting the cases analyzed above.
It follows that invocations of \textsc{Dist} at level $i$ also return the correct answer.
The correctness of the query algorithm then follows by induction. 

\textbf{Query time.}
We initially call \textsc{Dist} at level $0$. 
A call to \textsc{Dist} at level $i$, either returns the answer in $\cO(\log^2 n)$ time by using MSSP information and/or invoking \textsc{PointLocate} (Lines \ref{psdist2:start1}-\ref{psdist2:end1}), or makes a call to \textsc{ModifiedPointLocate} at level $i+1$ (\cref{psdist2:modpointloc}). For the latter to happen it must be that $i \leq m-3$.
This call to \textsc{ModifiedPointLocate} at level $i+1$ invokes a call to \textsc{ModifiedHandleCentroid} at level $i+1$.
Then, \textsc{ModifiedHandleCentroid} makes $\cO(\log r_{i+1})$ calls to \textsc{Dist} at level $i+1$, while going down the centroid decomposition of a Voronoi diagram with $\cO(\sqrt{r_{i+1}})$ sites.

Let $D[i]$ be the time required by a call to \textsc{Dist} at level $i$ and $C[i,t]$ be the time required by a call to \textsc{ModifiedHandleCentroid} at level $i$, where the size of the subtree of $\VD^*$ corresponding to the input centroid is $t$. We then have the following:
$$D[i]= \begin{cases}
	C[i+1,r_{i+1}]+\cO(\log n+\log^2 {r_{i+1}}) &  \text{If $i < m-2$} \\
	\cO(\log^2 n) & \text{If $i=m-2$}
\end{cases}
$$
(The additive $\cO(\log n)$ factor comes from the possible existence of multiple holes, see~\cref{sec:removingholes}.)

$$C[i,k]=
\begin{cases}
\cO(D[i]+\log n) + C[i,k/2] & \text{If $k>1$} \\
2D[i] & \text{If $k=1$}	
\end{cases}
$$
(The additive $\cO(\log n)$ factor comes from resolving the left/right/ancestor relationship.)

\noindent Hence, for $i<m-2$, $D[i]=\cO(\log r_{i+1}(D[i+1]+\log n))$.
The total time complexity of the query is thus $D[0]=\cO(\log^2 n \prod\limits_{i=1}^{m-3} c\log r_{i+1})=\cO(c^m \log^{m-1} n)$, where $c>0$ is a constant.

\ifspreport
\textbf{Retrieving the shortest path.} 
First of all, let us recall that given an MSSP data structure for a planar graph $H$ of size $n$ with sources the nodes of a face $f$, we can construct the shortest path from a node $x$ in $f$ to any node $y$ of $H$, backwards from $y$ in $\cO(\log\log n)$ per reported edge (cf.~\cite{PlanarBook}).
We can easily retrieve one by one the edges of the coarse trees $T^{Q_i}_{\beta[i]}$ corresponding to paths whose concatenation would give us the shortest $u$-to-$v$ path. The shortest path is in fact the concatenation of the $\beta[i]$-to-$\beta[i+1]$ paths in $T^{Q_i}_{\beta[i]}$, plus perhaps an extra final leg, which lies entirely in $Q_j \setminus (Q_{j-1} \setminus \partial Q_{j-1})$ for some $j$, if $v$ was appended to $\beta$.
Our MSSP data structures for each $i$ allow us to refine those coarse edges of $T^{Q_i}_{\beta[i]}$ when their underlying shortest path  lies entirely within $Q_{i+1} \setminus (Q_{i} \setminus \partial Q_{i})$.
For an edge $(x,y)$ of $T^{Q_i}_{\beta[i]}$ that corresponds to a path lying entirely in $\out{Q_{i+1}}$, we use the $x$-to-$y$ path in $T^{Q_{i+1}}_x$, which we can refine using the MSSP information we have for $Q_{i+2} \setminus (Q_{i+1} \setminus \partial Q_{i+1})$. We then proceed inductively on coarse edges whose underlying path is entirely in $\out{Q_{i+2}}$. In each step we obtain and refine an edge from a node in $\partial Q_i$ to a node in $\partial Q_{j}$, $j \in \{i-1,i,i+1\}$ using  the local MSSP information.
To summarize, we retrieve one by one the edges of the coarse trees $T^{Q_i}_{\beta[i]}$, corresponding to portions of the actual shortest path that lie entirely in $Q_i \setminus (Q_{i-1} \setminus \partial Q_{i-1})$ in $\cO(1)$ time each.
We then use the local MSSP information to obtain the actual edges of each coarse edge in $\cO(\log\log n)$ time per actual edge.

\bigskip
\fi

We prove in~\cref{sec:constr} that the time bound for the construction of the data structure is $\cOtilde(n \sum_i\frac{r_{i+1}}{\sqrt r_i})$. We summarize our main result in the following theorem.

\begin{restatable}{theorem}{mainthm}\label{main}
Given a planar graph $G$ of size $n$, for any decreasing sequence $n=r_m>r_{m-1}> \cdots > r_1$, where $r_1$ is a sufficiently large constant and $m \geq 3$, there is a data structure of size $\cO(n \sum_i\frac{r_{i+1}}{r_i} \log r_{i+1})$ that answers distance queries in time $\cO(c^m \log^{m-1} n)$, where $c>0$ is a constant. This data structure can be constructed in time $\cOtilde(n \sum_i\frac{r_{i+1}}{\sqrt r_i})$. 
\ifspreport
The shortest path can be retrieved backwards from $v$ to $u$, at a cost of $\cO(\log\log n)$ time per edge.
\fi
\end{restatable}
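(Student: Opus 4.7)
The plan is to assemble the statement from three ingredients developed in this section—space accounting, correctness of the recursive query, and the query-time recurrence—and to defer the construction bound to~\cref{sec:constr}. First I would fix the recursive $(r_m,\ldots,r_1)$-division given by~\cite{DBLP:conf/stoc/KleinMS13} and itemize, at each level $i$, the auxiliary information attached to a piece $R\in\mathcal{R}_i$ with parent $Q\in\mathcal{R}_{i+1}$: the two MSSP structures on the shell $Q\setminus(R\setminus\partial R)$, and for each $u\in\partial R$ the dual Voronoi representations $\VDin^*(u,Q)$ and $\VDout^*(u,Q)$ together with the coarse tree $T_u^R$ preprocessed for $\cO(1)$-time LCA and level-ancestor queries. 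Summing contributions level by level, the MSSP structures account for $\cO(n\sum_i\frac{r_{i+1}}{r_i}\log r_{i+1})$, while the Voronoi and coarse-tree data add only $\cO(n\sum_i(r_{i+1}/r_i)^{1/2})$, which is absorbed into the former bound.

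Correctness I would prove by downward induction on the level $i$ at which \textsc{Dist}$(u,v,Q_i)$ is invoked. The base case $i=m-2$ is fully handled by \textsc{PointLocate} on $\VDout^*(u,Q_{m-1})$ together with the in-scope MSSP on $Q_{m-1}\setminus(Q_{m-2}\setminus\partial Q_{m-2})$, and its correctness reduces to~\cref{thm:pointloc18} since we do store MSSP for the underlying graph. For $i<m-2$, under the hypothesis that \textsc{Dist} is correct at level $i+1$, the only nontrivial obligation inside \textsc{ModifiedHandleCentroid} is the left/right/ancestor test at Line~\ref{ps12:leftright}, which no single stored MSSP can answer because we deliberately omit MSSP on the entirety of $\out{Q_{i+1}}$. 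Here I would invoke the case analysis already presented as Cases~1a--c and 2a--b: the lists $\beta,\gamma$ returned by the two recursive calls pin down the largest $k$ at which the coarse encodings agree, and the test is then read off from a single local artefact at levels $k$ and $k{+}1$—either an MSSP query on $Q_{k+1}\setminus(Q_k\setminus\partial Q_k)$ (or on $Q_k\setminus(Q_{k-1}\setminus\partial Q_{k-1})$ when both endpoints lie inside $Q_k$), a preorder/level-ancestor query in $T^{Q_k}_{\beta[k]}$, or the circular ordering of the three first/last fine edges stored with the relevant coarse edges. Correctness of this reduction rests on the observation that $T^{Q_i}_{\beta[i]}$ preserves left/right/ancestor relations on $\partial Q_i\cup\partial Q_{i+1}$ with respect to the fine shortest-path tree in $\out{Q_i}$, together with the tie-breaking convention assigning every vertex to the Voronoi cell of the last site on its shortest path.

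For the query time I would set up the recurrence as in the section: an invocation of \textsc{Dist} at level $i<m-2$ delegates to \textsc{ModifiedPointLocate} at level $i+1$, which descends a centroid decomposition of depth $\cO(\log r_{i+1})$, making $\cO(1)$ recursive calls to \textsc{Dist} at level $i+1$ and doing $\cO(\log n)$ work per centroid to resolve the relationship. Writing $D[i]$ for the running time of \textsc{Dist} at level $i$, this gives $D[i]=\cO(\log r_{i+1})(D[i+1]+\log n)$ with $D[m-2]=\cO(\log^2 n)$, which telescopes to $D[0]=\cO(c^m\log^{m-1}n)$ for an absolute constant $c$. Combining this with the space bound establishes the query/storage parts of the theorem, and the construction time $\cOtilde(n\sum_i r_{i+1}/\sqrt{r_i})$ is supplied by~\cref{sec:constr}; the optional backward shortest-path recovery at $\cO(\log\log n)$ per edge comes from iteratively refining the coarse edges of the trees $T^{Q_i}_{\beta[i]}$ using the local MSSP structures.

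The step I expect to be most delicate is the correctness of the left/right/ancestor resolution via coarse encodings. The subtlety is that $\beta$ is a very sparse sample of the true shortest path—only one boundary vertex per level—so I would need to show that the divergence index $k$ identified from $\beta$ and $\gamma$ is exactly the level at which the fine shortest paths in $\out{Q_i}$ first separate, and that the chosen local query at levels $k,k{+}1$ really reports the correct side. I expect this to hinge on uniqueness of shortest paths and on the fact that any fine shortest path from a vertex of $\partial Q_k$ stays on one side of $\partial Q_{k+1}$ between consecutive entries, so the bulk of the write-up would consist of formalizing these invariants once and then checking each of the six sub-cases against them.
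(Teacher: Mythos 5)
Your proposal follows the paper's own proof essentially line for line: the same per-level data (MSSP on shells, dual Voronoi pairs $\VDin^*/\VDout^*$, preprocessed coarse trees), the same space accounting with the Voronoi/coarse-tree term absorbed into the MSSP bound, correctness by downward induction with base case $i=m-2$ via \cref{thm:pointloc18} and inductive step via the six-way left/right/ancestor case analysis, the identical recurrence $D[i]=\cO(\log r_{i+1})(D[i+1]+\log n)$ yielding $D[0]=\cO(c^m\log^{m-1}n)$, and deferral of the construction bound to \cref{sec:constr}. You also correctly single out the left/right/ancestor resolution as the delicate point, and the invariants you name (uniqueness of shortest paths, fidelity of coarse trees on $\partial Q_i\cup\partial Q_{i+1}$, shell-separation of subpaths between consecutive boundary crossings) are exactly what the paper's case analysis rests on.
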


\noindent We next present three specific tradeoffs that follow from this theorem.

\begin{restatable}{corollary}{maincor}\label{cor:tradeoffs}
Given a planar graph $G$ of size $n$, we can construct a distance oracle admitting any of the following $\langle$space, query-time$\rangle$ tradeoffs:
\begin{enumerate}[label=({\alph*})]
\item $\langle \cO(n \log^{2+1/\epsilon} n),\cOtilde(n^{\epsilon})\rangle$, for any constant $\epsilon>0$;
\item $\langle n^{1+o(1)},n^{o(1)}\rangle$;
\item $\langle \cOtilde(n^{1+\epsilon}),\cO(\log^{1/\epsilon} n)\rangle$, for any constant $1/2 \geq \epsilon>0$.
\end{enumerate}
The construction time is $\cOtilde(n^{3/2} \log^{1/2\epsilon} n)$, $n^{3/2+o(1)}$ and $\cOtilde(n^{3/2+\epsilon})$, respectively.
\end{restatable}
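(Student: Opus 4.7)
The plan is to derive each of (a), (b), (c) by plugging a suitably chosen geometric sequence into Theorem~\ref{main}. For a parameter $m \geq 3$ to be selected, set $r_i = n^{(i-1)/(m-1)}$, so that $r_m = n$, $r_1 = \Theta(1)$, and the common ratio $q := r_{i+1}/r_i = n^{1/(m-1)}$ is the same at every level. Substituting into the theorem, the space becomes $\cO(n\,(m-1)\,q\,\log n)$, the query time becomes $\cO(c^m \log^{m-1} n)$, and since $\sum_i r_{i+1}/\sqrt{r_i}$ is a geometric sum dominated by its last term, the construction time is $\cOtilde(n^{3/2}\cdot n^{1/(2(m-1))})$. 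What remains is to choose $m$ for each tradeoff.

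For (c), I take $m = \lceil 1/\epsilon\rceil + 1$, a constant at least $3$ when $\epsilon \leq 1/2$. Then $q = n^\epsilon$, giving space $\cOtilde(n^{1+\epsilon})$ and query time $\cO(\log^{1/\epsilon} n)$; the construction time reduces to $\cOtilde(n^{3/2+\epsilon/2}) \subseteq \cOtilde(n^{3/2+\epsilon})$.

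For (b), I take $m = \lceil \log\log n\rceil$. Then $q = 2^{\log n/\log\log n} = n^{o(1)}$, so the space $n\,m\,q\,\log n$ is $n^{1+o(1)}$; the query time $c^m\log^{m-1} n = 2^{O((\log\log n)^2)}$ is $n^{o(1)}$; and the construction bound is likewise $n^{3/2+o(1)}$.

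The main balance is (a): space must stay within a polylogarithmic factor of $n$ while the query budget is $n^\epsilon$. The guiding identity $\log^{\epsilon \log n/\log\log n} n = n^\epsilon$ prompts me to take $m = \lceil \epsilon \log n/\log\log n \rceil + 1$, which makes $q = \log^{1/\epsilon} n$. Then the space $n\,m\,q\,\log n$ simplifies to $\cO(n\log^{2+1/\epsilon} n)$ since $m = \Theta(\log n/\log\log n)$ contributes a $\log\log n$ in the denominator that absorbs a factor of $\log n$; the query time is $c^m\log^{m-1} n = n^{o(1)}\cdot n^\epsilon = \cOtilde(n^\epsilon)$, and the construction time comes out to $\cOtilde(n^{3/2}\log^{1/(2\epsilon)} n)$. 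The only care required is in rounding $m$ to an integer and tracking the $\log\log n$ factors hidden inside $c^m$; since Theorem~\ref{main} has already done the structural work, this corollary is essentially an exercise in parameter selection, and I do not anticipate a significant obstacle beyond bookkeeping.
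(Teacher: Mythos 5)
Your proposal is correct and follows the same route as the paper: pick a geometric sequence $r_i$ with a common ratio $\rho = r_{i+1}/r_i$ and plug into Theorem~\ref{main}, choosing the ratio (equivalently, $m$) per tradeoff. The only divergence is cosmetic—in case (b) you take $m = \lceil\log\log n\rceil$ (so $\rho \approx 2^{\log n/\log\log n}$), while the paper takes $\rho = 2^{\sqrt{\log n}}$ (so $m \approx \sqrt{\log n}$); both choices give $n^{1+o(1)}$ space, $n^{o(1)}$ query time, and $n^{3/2+o(1)}$ construction.
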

\begin{proof}
Let $\rho_i$ denote $\frac{r_{i+1}}{r_i}$; we will set all $\rho_i$ to be equal to a value $\rho$.
We get tradeoffs (a), (b) and (c) by setting $\rho=\log^{1/\epsilon} n$,  $\rho=2^{\sqrt{\log n}}$ and $\rho=n^\epsilon$, respectively.

We provide the arithmetic below.

\begin{enumerate}[label=({\alph*})]
\item If we set $\rho=\log^{1/\epsilon} n$, then the depth of the recursive $r$-division is $m=\log_{\rho}n+1=\frac{\log n}{\log \rho}+1=\epsilon \log n/\log\log n+1$. Then, the space required is $\cO(n m \rho \log n )=\cO(n \log^{2+1/\epsilon} n)$ and the time required to answer a distance query is $\cOtilde((c\log n)^{\log_{\rho}n})=\cOtilde(n^{\epsilon})$.
\item If we set $\rho=2^{\sqrt{\log n}}$, then the depth of the recursive $r$-division is $m=\log_{\rho}n+1=\frac{\log n}{\log \rho}+1=\sqrt{\log n}+1$. Then, the space required is $\cO(nm \rho \log n)=n^{1+o(1)}$ and the time required to answer a distance query is $\cOtilde(\log^{\log_{\rho}n}n)=\cOtilde(\log^{\sqrt{\log n}}n)=\cOtilde(2^{\sqrt{\log n}\log\log n})=n^{o(1)}$.
\item If we set $\rho=n^\epsilon$, then the depth of the recursive $r$-division is $m=\log_{\rho}n+1=1/\epsilon+1$. Then, the space required is $\cO(n m\rho \log n)=\cO(n^{1+\epsilon}\log n)$ and the time required to answer a distance query is $\cO(\log^{\log_{\rho}n}n)=\cO(\log^{1/\epsilon}n)$.
\end{enumerate}
\end{proof}

\subsection{Dealing with multiple holes}
\label{sec:removingholes}
We now remove the assumption that all boundary vertices of each piece lie on a single hole. 

\paragraph{Data Structure.} The oracle consists of the following for each $0 \leq i \leq m-1$, for each piece $R \in \mathcal{R}_i$ whose parent in $\TG$ is $Q \in \mathcal{R}_{i+1}$:
\begin{enumerate}
\item If $i>0$, for each pair of holes $h$ of $R$ and $g$ of $Q$, such that $g$ lies in $R^{h,out}$, two MSSP data structures for $R^{h,out} \cap Q$, one with sources the vertices of $\partial R$ that lie on $h$, and one with sources the vertices of $\partial Q$ that lie on $g$.
\item If $i<m-1$, for each boundary vertex $u$ of $R$ that lies on a hole $h$ of $R$:
\begin{itemize} 
\item For each hole $g$ of $Q$ that lies in $R^{h,out}$:
\begin{itemize}
\item $\VDin^*(u,Q,g)$, the dual representation of the Voronoi diagram for $R^{h,out} \cap Q$ with sites the nodes of $\partial Q$ that lie on $g$, and additive weights the distances from $u$ to these nodes in $R^{h,out}$;
\item $\VDout^*(u,Q,g)$, the dual representation of the Voronoi diagram for $Q^{g,out}$ with sites the nodes of $\partial Q$ that lie on $g$, and additive weights the distances from $u$ to these nodes in $R^{h,out}$;
\end{itemize}

\item If $i>0$, the coarse tree $T_u^{R,h}$, which is the tree obtained from the shortest path tree rooted at $u$ in $R^{h,out}$ by contracting any edge incident to a vertex that is neither in $h$ nor in $\partial Q$, preprocessed as in the single hole case. 
\end{itemize}
\end{enumerate}

\paragraph{Query.} If $v \in Q_{i+1} \setminus Q_i$ then in Line~\ref{psdist2:near} of \textsc{Dist}, we need to consider $\cO(1)$ Voronoi diagrams $\VDin^*(u,Q_{i+1},g)$, one for each hole of $Q_{i+1}$,  instead of just one. 
If not, we find the correct hole $h$ of $Q_{i+1}$ such that $v \in Q_{i+1}^{h,out}$, and recurse on $\VDout^*(u,Q_{i+1},h)$. 

We can find the correct hole by storing some information for each hole of each piece.
It is not hard to see that each separator of the $\cO(\log n)$ ancestors of a piece $P \in \AG$ lies in $P^{h,out}$ for some hole $h$ of $P$.
For each piece $P \in \AG$, we store, for each separator of an ancestor of $P$ in $\AG$, the hole $h$ of $P$ such that $P^{h,out}$ contains that separator. 
Then, by performing an LCA query for the constant size piece of $\AG$ containing $v$ and $Q_{i+1}$ in $\AG$ we find the separator that separated $v$ from $Q_{i+1}$ and we can thus find the appropriate hole of $Q_{i+1}$ in $\cO(1)$ 
time.

Obtaining the left/right/ancestor information is essentially the same as in the single hole case. There is one additional case that needs to be considered, in which the vertex at which the $s$-to-$v$ path and the $s$-to-$v^c$ path diverge is located inside a hole. 
In this case we may have to look into finer and finer coarse trees until we get to a level of resolution where we can either deduce the left/right relationship from the coarse tree at that level, or from the MSSP data structure at that level. 
This refining process has $\cO(\log n)$ levels and takes constant time per level, except for the last level where we use the MSSP data structure in $\cO(\log n)$ time as in the single hole case. Therefore, the asymptotic query time is unchanged.   

\ifspreport
As for retrieving the edges of the shortest path in $\cO(\log\log n)$ time each, the presence of multiple holes does not pose any further complications.
\fi

\paragraph{Dealing with holes that are non-simple cycles} 
Non-simple holes do not introduce any significant difficulties. The technique to handle non-simple holes is the one described in~\cite[Section~5.1, pp. 27]{DBLP:journals/talg/KaplanMNS17}. 
We make an incision along the non-simple boundary of a hole to turn it into a simple cycle. 
Making this incision creates multiple copies of vertices that are visited multiple times by the non-simple cycle. However, the total number of copies is within a constant of the original number of boundary vertices along the hole. 
These copies do not pose any problems because, for our purposes, they can be treated as distinct sites of the Voronoi diagram.

\section{Construction Time}\label{sec:constr}

In this section we show how to construct our oracles in  the $\cOtilde(n \sum_i\frac{r_{i+1}}{\sqrt r_i})$ time stated in Theorem~\ref{main}. Before doing so, we give some preliminaries on dense distance graphs. 

\paragraph{Dense distance graphs and FR-Dijkstra.}
The \emph{dense distance graph} of a piece $P$, denoted 
$DDG_P$ is a complete directed graph on the boundary vertices of $P$.
Each edge $(u,v)$ has weight $d_{P}(u,v)$, equal to the length of the shortest $u$-to-$v$ path in $P$.
$DDG_P$ can be computed in time $\cO(|\partial P|^2 + |P| \log |P|)$ using the multiple source shortest paths (MSSP) algorithm~\cite{MSSP,DBLP:journals/siamcomp/CabelloCE13}.
Over all pieces of the recursive decomposition this takes time $\cO(n \log^2 n)$ in total and requires space $\cO(n \log n)$.
We refer to the aforementioned (standard) DDGs as {\em internal}; the {\em external} DDG of a piece $P$ is the complete directed graph on the vertices of $\partial P$, with the edge $(u,v)$ having weight equal to $d_{\out{P}}(u,v)$. 
There is an efficient implementation of Dijkstra's algorithm (nicknamed FR-Dijkstra~\cite{FR}) that runs on any  union of $DDG$s. The algorithm exploits the fact that, due to planarity, a DDG's adjacency matrix $M$ satisfies a Monge property (namely, for any $i\neq j$ we have that $M[i+1,j] + M[i,j+1]~\le~ M[i+1,j+1] + M[i,j]$). We next give a --convenient for our purposes -- interface for FR-Dijkstra which was essentially proved in~\cite{FR}, with some additional components and details from~\cite{DBLP:journals/talg/KaplanMNS17,DBLP:conf/esa/MozesW10}.

\begin{theorem}[\cite{FR,DBLP:journals/talg/KaplanMNS17,DBLP:conf/esa/MozesW10}]\label{thm:FR}
A set of $DDG$s with $\cO(M)$ vertices in total (with multiplicities), each having at most $m$ vertices, can be preprocessed in time and space $\cO(M \log m)$ in total.
After this preprocessing, Dijkstra's algorithm can be run on the union of any subset of these $DDG$s with $\cO(N)$ vertices in total (with multiplicities) in time $\cO(N\log^2 m)$.
\end{theorem}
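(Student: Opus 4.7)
The plan is to exploit the Monge property satisfied by each DDG's adjacency matrix (stated in the excerpt as $M[i+1,j] + M[i,j+1] \le M[i+1,j+1] + M[i,j]$, a consequence of planarity and of boundary vertices lying on a hole) to build, for each DDG $H$ of size $k$, a data structure supporting a ``batch Dijkstra relaxation'' primitive. Concretely, the structure maintains $\min_{i \in A}(d_i + H[i,j])$ over a dynamic set $A$ of \emph{activated} rows (rows whose Dijkstra distances $d_i$ have been finalized) and supports (i) activating a new row and (ii) returning the column of currently smallest value. The union-Dijkstra is then implemented by turning each extract-min into a row-activation in every DDG containing the extracted vertex, and each relaxation into a column-min query.

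For each DDG of size $k$, I would build a balanced binary tree over its rows. Each internal node $u$ covering a range of rows stores, over the active rows in its subtree, the lower envelope of the affine-in-$H$ functions $j \mapsto d_i + H[i,j]$. By the Monge property, the argmin row at column $j$ is a monotone function of $j$, so the envelope is a \emph{staircase}: a sequence of column intervals, each labelled by a single dominating row. I would represent each staircase as a balanced BST keyed on the interval breakpoints. Inserting a new row $i$ into a staircase of size $s$ is done by binary search: Monge guarantees that the set of columns where $i$ newly becomes the minimum is a single (possibly empty) contiguous range, whose endpoints are located in $O(\log s)$ time by comparing matrix entries across breakpoints. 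Activation of a row therefore touches $O(\log k)$ ancestors and spends $O(\log k)$ per ancestor, giving $O(\log^2 m)$ per activation; column-min queries hit only the root staircase in $O(\log m)$. Initializing the (empty) tree and the auxiliary structures needed to support $O(1)$-amortized access to matrix entries through the Monge compression takes $O(k \log k)$ per DDG and $O(M \log m)$ in total, matching both the preprocessing time and the space bound.

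For the query, I would run Dijkstra on the union of the selected subset of DDGs using one global priority queue whose entries are the current root-staircase minima, one per selected DDG. An extract-min returns the globally smallest tentative distance together with the DDG that supplied it; we then activate the corresponding vertex in every selected DDG that contains it and refresh those DDGs' entries in the global priority queue. Each vertex (counted with multiplicity) is activated at most once, so there are $O(N)$ activations at $O(\log^2 m)$ each and $O(N)$ priority-queue updates at $O(\log N)=O(\log m)$ each, yielding the advertised $O(N \log^2 m)$ query time.

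The main technical obstacle, and the place where the Monge hypothesis is used most heavily, is the amortized analysis of the staircase maintenance: I need to guarantee that at every node of the per-DDG tree the staircase has size at most the number of active rows in its subtree (so that insertions cost only $O(\log m)$) and that the $O(M \log m)$ space bound holds even though the raw adjacency matrices have total size $\Theta(\sum k_i^2)$, which can be $\Theta(Mm)$. Both points are handled by storing each DDG only via its Monge-compressed representation (extracted from the MSSP trees used to build it) rather than as an explicit $k\times k$ table, and by a standard potential-function argument that charges each staircase breakpoint to a distinct active row in the subtree; the details can be imported verbatim from~\cite{FR,DBLP:conf/esa/MozesW10,DBLP:journals/talg/KaplanMNS17}.
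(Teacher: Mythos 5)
The paper does not actually prove this statement: Theorem~\ref{thm:FR} is quoted as an interface to prior work, with the proof residing in \cite{FR}, \cite{DBLP:conf/esa/MozesW10} and \cite{DBLP:journals/talg/KaplanMNS17}. Your sketch reconstructs essentially that same machinery --- per-DDG Monge envelope/staircase maintenance under row activations, plus a global priority queue across the selected DDGs --- and then defers the delicate parts to the very same references, so in spirit you are on the paper's route rather than a new one.

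Two caveats are worth flagging, one of which is a real gap in your write-up as stated. First, your space accounting leans on ``storing each DDG only via its Monge-compressed representation (extracted from the MSSP trees)'' with $\cO(1)$-amortized entry access; that is not a primitive you can import verbatim. MSSP access to a distance entry costs $\Theta(\log n)$, and if you only keep a compressed representation the claimed $\cO(N\log^2 m)$ query time silently acquires an extra factor. The standard reading of the theorem (and the way it is used here) is that the DDGs themselves, i.e.\ their weight matrices, are given with $\cO(1)$ access --- their $\Theta(\sum_i k_i^2)$ storage is charged to the DDG construction (in this paper it totals $\cO(n\log n)$ over the recursive decomposition), and the $\cO(M\log m)$ bound covers only the auxiliary FR/Monge-heap structures. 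Second, the adjacency matrix of a DDG whose boundary lies on a hole is not globally Monge; the Monge inequality holds only after decomposing with respect to the cyclic order into triangular/staircase pieces (the paper's preliminaries state the simplified version as well). Your invariant ``one staircase per row-subtree, new row wins on a single contiguous column interval'' needs that decomposition --- this is exactly what the block partition of \cite{FR} and the staircase-Monge treatment of \cite{DBLP:journals/talg/KaplanMNS17} provide. Finally, a minor but necessary detail: extracted (finalized) columns must be deactivated or lazily skipped so that the root-staircase minimum ranges only over unfinalized vertices; this is standard and does not affect the bounds.
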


\paragraph{The construction.}
For ease of presentation we present the construction algorithm assuming pieces have only one hole. Generalizing to multiple holes does not pose any new obstacles. Recall that for each piece $R$ in the recursive decomposition whose parent in $\TG$ is $Q$ we need to compute two MSSP data structures, and for each boundary vertex $u$ of $R$ we need to compute $\VDin^*(u,Q)$, $\VDout^*(u,Q)$, and the coarse tree $T_u^R$.

Computing the MSSP data structures takes nearly linear time in the total size of these data structures.
To be able to compute the distances between vertices in the same piece of the $r_1$ division, additive distances for all Voronoi diagrams, and the coarse trees $T_u^R$, we compute the internal and external dense distance graphs of every piece in the recursive $r$-division. 
This can be done in nearly linear time~\cite{DBLP:conf/soda/MozesS12}. 
We also compute the dense distance graph of $\partial R \cup \partial Q$. 
This is done by running MSSP twice in $Q \setminus (R \setminus \partial R)$, and then querying all pairwise distances.
For a piece $R$ of the $r_i$ division, this takes $\tilde \cO(r_{i+1})$ time, so a total of $\tilde \cO(\sum_i n\frac{r_{i+1}}{r_i})$ time overall. 

We then compute the distances from each vertex in $G$ to all other vertices in the same piece $R$ of the $r_1$-division in time $\cO(1)$, using FR-Dijkstra on $R$ and the external DDG of $R$.
To compute the additive weights for the Voronoi diagrams $\VDin^*(u,Q)$ and $\VDout^*(u,Q)$, as well as the coarse tree $T_u^R$, we need to compute distances from $u$ to $\partial Q \cup \partial R$ in $\out{R}$. 
This can be done in time $\cO(|\partial Q|\log^2 |\partial Q|)$ by running FR-Dijkstra from $u$ on the union of the DDG of $Q \setminus (R \setminus \partial R)$ and the external DDG of $Q$. 
The total time to compute all such additive weights is thus $\cO(\sum_i n \sqrt{\frac{r_{i+1}}{r_i}}\log^2 r_{i+1} )$.

To compute each Voronoi diagram $\VDin^*(u,Q)$ we compute the primal Voronoi diagram $\VDin(u,Q)$ with a single-source shortest path computation in $Q$ from an artificial super-source $s$ connected to all vertices $\partial Q$ with edges whose weights are the additive weights. This can be done in linear $O(|Q|)$ time~\cite{DBLP:journals/jcss/HenzingerKRS97}. In $O(|Q|)$ time it is then easy to obtain the dual $\VDin^*(u,Q)$ from the primal $\VDin(u,Q)$. The total time to compute all such diagrams is therefore $\cO(\sum_i n \frac{r_{i+1}}{\sqrt{r_i}})$.

It remains to compute the diagram $\VDout^*(u,Q)$. Here we cannot afford to make an explicit computation of the primal Voronoi diagram $\VDout(u,Q)$. Instead, we will compute just the tree structure and the trichromatic faces of $\VDout(u,Q)$. 
These suffice as the representation of $\VDout^*(u,Q)$ since all we need for point location is the centroid decomposition of $\VDout^*(u,Q)$. The main idea is to use FR-Dijkstra to locate the trichromatic faces. 
Let $K$ be a star with center $u$ and leaves $\partial Q$, where for each vertex $w \in \partial Q$,  the length of the arc $uw$ is the additive distance of $w$ in $\VDout^*(u,Q)$.
We start by computing a shortest path tree $T$ rooted at $u$ in the union of $K$ and the DDGs of all siblings of pieces in the complete recursive decomposition tree $\AG$ that contain $Q$. We shall prove that, by inspecting the restriction of $T$ to the DDG of each piece $P$, we can infer whether $P$ contains a trichromatic face or not. 
If $P$ contains a trichromatic face we refine $T$ inside $P$ by replacing the DDG of $P$ with the DDGs of its two children in $\AG$, and recomputing the part of $T$ inside $P$ using FR-Dijkstra. 
We continue doing so until we locate all $O(|\partial Q|)$ trichromatic faces. The tree structure of $\VDout(u,Q)$ is captured by the structure of the shortest path tree in the DDGs of all the pieces at the end of this process.
The total time to locate all the trichromatic faces is proportional, up to polylogarithmic factors, to the total number of vertices in all of the DDGs involved in all these computations.
Since the total number of vertices in all DDGs containing a particular face is $\cO(\sqrt n)$, the total time for finding all trichromatic faces as well as the tree structure is $\cOtilde(\sqrt n \cdot \sqrt{|Q|})$. 
A more careful analysis that takes into account double counting of large pieces (cf.~\cite{faultyOracle}) bounds this time by $\cOtilde(\sqrt{n \sqrt {|Q|}})$. Since there are $n/\sqrt r_i$ boundary vertices at level $i$ of the recursive $r$-division, the total time to compute all Voronoi diagrams $\VDout^*(u,Q)$ for a single level is $\cOtilde(\frac{n}{\sqrt r_i} \cdot \sqrt{n} r_i^{1/4}) = \cO(n^{3/2})$. This is dominated by the  $\cO(\sum_i n \frac{r_{i+1}}{\sqrt{r_i}})$ term for any choice of $r_i$'s.

The only missing part in the explanation above is showing we can infer whether a piece $P$ contains a trichromatic face just by inspecting the restriction of $T$ to $P$. 
Our choice of additive distance guarantees that each vertex of $\partial Q$ is a child of $u$ in $T$. We label each vertex $w$ of $T$ by its unique ancestor in $T$ that belongs to $\partial Q$. Note that the label of a vertex $w$ corresponds to the Voronoi cell containing $w$ in $\VDout^*(u,Q)$.
Consider the restriction of $T$ to the DDG of $P$. 
We use a representation of size $\cO(|\partial P|)$ of the edges of $T$ embedded as curves in $P$, such that each edge of $T$ is homologous (w.r.t. the holes of $P$) to its underlying shortest path in $P$. See~\cite{DBLP:conf/esa/LackiS11,DBLP:conf/soda/MozesNNW18} for details on such a representation. 
We make incisions in the embedding of $P$ along the edges of $T$ (the endpoints of edges of $T$ are duplicated in this process). Let $\mathcal P$ be the set of connected components of $P$ after all incisions are made.

 \begin{lemma}\label{lem:trichrom}
$P$ contains a trichromatic face if and only if some connected component $C$ in $\mathcal P$ contains boundary vertices of $P$ with more than two distinct labels.
\end{lemma}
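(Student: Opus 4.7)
The plan is to prove both directions by analyzing the interplay between Voronoi cells, incision curves, and uniqueness of shortest paths. I would begin by establishing two preliminary facts. First, each edge $(x,y) \in T \cap DDG_P$ connects two $\partial P$ vertices that carry the same label: the child $y$ inherits its label (its ancestor in $\partial Q$ in $T$) from its parent $x$, regardless of whether $x \in \partial Q$ or not. Hence every incision curve lies within a single Voronoi cell of $\VDout(u,Q)$. Second, by the unique shortest path assumption and the choice of tie-breaking, every internal vertex $w \in P$ inherits its label from the unique vertex $x \in \partial P$ at which the shortest $u$-to-$w$ path (in $\out{Q}$) enters $P$; in particular, the $x$-to-$w$ portion of that path is a shortest path in $P$.

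For the $(\Leftarrow)$ direction, suppose some component $C \in \mathcal{P}$ contains three $\partial P$ vertices $x_1,x_2,x_3$ with pairwise distinct labels $a_1,a_2,a_3$. Then the Voronoi diagram restricted to $V(C)$ has three nonempty cells, and a standard planarity argument on the subgraph induced by $C$ yields a face of $C$ --- which is also a face of $P$, since incisions are along curves and avoid face interiors --- whose three vertices belong to three distinct color classes. That face is exactly a trichromatic face of $P$.

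For the $(\Rightarrow)$ direction, let $f$ be a trichromatic face of $P$ with vertices $v_1,v_2,v_3$ having labels $a_1,a_2,a_3$, and let $C$ be the component of $\mathcal{P}$ containing $f$. For each $i$, let $x_i \in \partial P$ be the vertex at which the shortest $u$-to-$v_i$ path enters $P$, so that $x_i$ has label $a_i$ and the $x_i$-to-$v_i$ portion is a shortest path in $P$. The main claim is that $x_i \in C$: every incision curve bounding $C$ is homologous in $P$ to a shortest path in $P$ whose both endpoints share a single label $b \ne a_i$; using the uniqueness of shortest paths together with the standard non-crossing property of planar shortest paths, the shortest $x_i$-to-$v_i$ path cannot properly cross any such incision, so it stays in $C$. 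Therefore $C$ contains three $\partial P$ vertices with three distinct labels, as required.

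The main obstacle will be formalizing this non-crossing argument in the $(\Rightarrow)$ direction under the homological --- rather than literal --- embedding of the incision curves. One needs to translate the homology guarantee of the representation of~\cite{DBLP:conf/esa/LackiS11,DBLP:conf/soda/MozesNNW18} into a concrete statement about which faces of $P$ lie on which side of each incision, and then invoke the uniqueness of shortest paths to rule out crossings between the catchment paths $x_i \to v_i$ and the incision curves.
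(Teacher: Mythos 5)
Your approach differs from the paper's and has a genuine gap in the ``$\Leftarrow$'' direction that is not just a matter of formalization. You argue that once a component $C$ contains $\partial P$ vertices with three distinct labels, ``a standard planarity argument'' produces a trichromatic face of $C$. No such argument exists at this level of generality: a planar (even triangulated) graph whose vertices are partitioned into three connected color classes need not contain a trichromatic face (think of three parallel monochromatic strips). What forces a trichromatic face in this setting is the extra structure that the sites can be placed on a single face of the component. The paper establishes exactly this by proving a nontrivial \emph{consecutiveness} property: on the face $f$ of $C$ facing $\partial Q$, vertices carrying the same label form a single contiguous arc. That claim is proved by a careful argument about the cycle $D$ formed by an $x$-to-$y$ path in $T$ and a portion of $f$, using that $T$ is a shortest-path tree and that the incisions run along $T$. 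Only then can one embed one artificial site per label in $f$ and invoke the known fact that a Voronoi diagram with more than two sites on its outer face has a trichromatic branch vertex inside. Your proposal skips the consecutiveness step entirely, and without it the ``$\Leftarrow$'' direction does not go through.

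The ``$\Rightarrow$'' direction in your write-up (tracing the catchment paths $x_i\to v_i$ and arguing they cannot cross incisions) is the gap you yourself flag, and it is indeed delicate: incision curves are only homologous to the underlying shortest paths, some incisions lie in the same cell as $v_i$, and the endpoint duplication at $\partial P$ vertices makes ``stays in $C$'' ambiguous. The paper sidesteps a separate ``$\Rightarrow$'' argument altogether: once consecutiveness lets you build $C'$ with artificial sites on a single face, the restriction of $\VD(C')$ to $C$ coincides with that of $\VDout(u,Q)$, so $C$ contains a trichromatic face if and only if $C'$ has more than two sites, which by construction happens if and only if $C$ has $\partial P$ vertices with more than two labels. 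So the equivalence falls out of the auxiliary-graph construction in one shot. I'd recommend you rebuild your proof around the consecutiveness claim rather than try to repair the two separate directions.
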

Intuitively, for each connected component $C $ in $\mathcal P$, each label appears as the label of boundary vertices along at most a single sequence of consecutive boundary vertices along the boundary of $C$. Consider the component $C'$ obtained from $C$ by connecting an artificial new vertex to all the boundary vertices with the same label. Since these boundary vertices form a single consecutive interval on the boundary of $C$, $C'$ is also a planar graph when the new vertices are embedded in the infinite face of $C$. Now $C'$ has a new infinite face where every vertex on that face has a distinct label, and there are more than two labels. The Voronoi diagram of $C'$ necessarily contains a trichromatic face.

\begin{figure}[htpb]
\begin{center}
\includegraphics[width=0.65\textwidth]{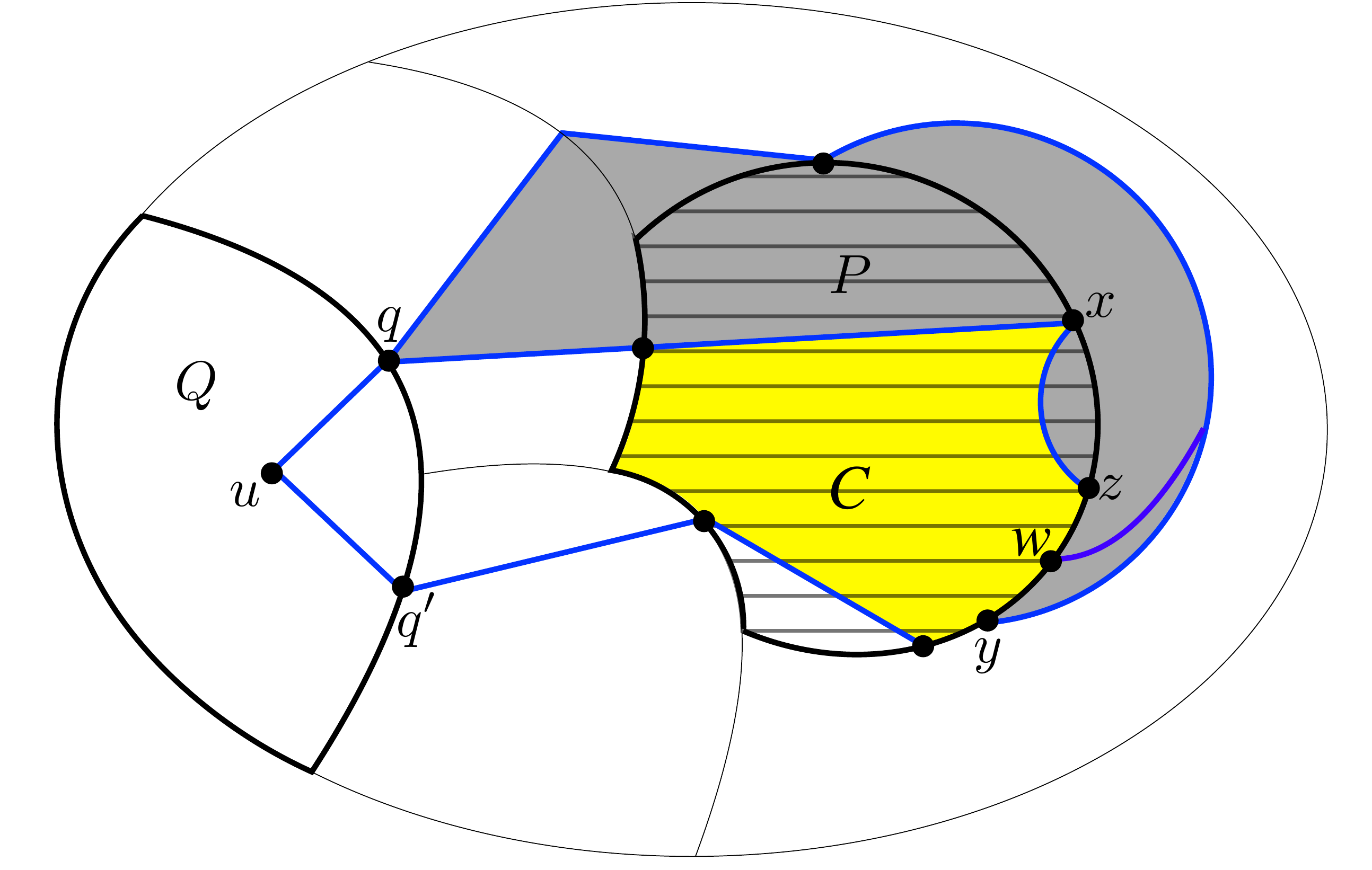}
\end{center}
\caption{Illustration for the proof of \cref{lem:trichrom}. Some pieces in a graph $G$ are shown. The piece $Q$ (in bold) is a piece of some $r_i$-division in the recursive $r$-division of $G$. The piece $P$ (bold boundary, horizontal stripes) is some piece in the complete recursive decomposition of $G$ that lies outside $Q$. The shortest path tree $T$ is shown in blue, the connected component $C$ in yellow, and the cycle $D$ in gray. Vertices $x$ and $y$ have the same label $q$. The vertices $w,z$ between $x$ and $y$ (on the cyclic walk $F$ along the infinite face of $C$) must also be labeled $q$.\label{fig:lemma8}}
\end{figure}

\begin{proof}
Let $C$ be a connected component in $\mathcal P$. 
Note that the vertices of $\partial Q$ either do not belong to $C$ or they are incident to a single face $f$ of $C$. In the former case, let $f$ be the face of $C$ such that $\partial Q$ is embedded in $f$. 
We think of $f$ as the infinite face of $C$. Note that, because any path from $\partial Q$ to any vertex of $C$ must intersect $f$, the set of  labels of the vertices of $f$ is identical to the set of labels of all of $C$. 

We first claim that the vertices of $f$ that have the same label are consecutive in the cyclic order of $f$. To see this, consider any two distinct vertices $x,y$ of $f$ that have the same label $q$. Note that, by the incision process defining $C$, if $x$ is an ancestor of $y$ in $T$ (or vice versa) then all the vertices between $x$ and $y$ on $f$ are on the $x$-to-$y$ path in $T$, and hence all have the label $q$. 
Assume, therefore, that neither is an ancestor of the other (as illustrated in Figure~\ref{fig:lemma8}), and consider the (not necessarily simple) cycle $D$ (in $\out Q$) formed by the unique $x$-to-$y$ path in $T$, and the $x$-to-$y$ path $F$ of $f$ such that $D$ does not enclose $C$. Observe that $D$ is non self crossing, and that $D$ encloses no vertices of $\partial Q$ except, perhaps, $q$ itself.
Suppose some vertex of $F$ has a label $q' \neq q$, and let $w$ be a rootmost such vertex in $T$. Since $D$ does not enclose $q'$, the $q'$-to-$w$ path in $T$ is not enclosed by $D$, and hence must intersect $C$. But then $C$ should have been further dissected when the incisions along $T$ were performed, a contradiction. 

The argument above established that the vertices of $f$ that have the same label are consecutive in the cyclic order of $f$.
For every unique label $q$ of a vertex of $f$ we embed 
 inside $f$ an artificial vertex and connect it to every vertex $w$ of $f$ that has label $q$ with an edge whose length is the $q$-to-$w$ distance in $T$. By the consecutiveness property above, this can be done without violating planarity. 
 We connect the artificial vertices by infinite length edges, according to the cyclic order of the labels along $f$, so that resulting graph $C'$ has a new infinite face containing just the artificial vertices. 
 Consider now the additively weighted Voronoi diagram $\VD(C')$ with the artificial vertices as sites, where the additive weight of an artificial vertex with label $q$ is equal to the additive weight of the corresponding site $q$ in $\VDout^*(u,Q)$. Note that, by construction, the additive distances in $\VD(C')$ and in $\VDout^*(u,Q)$ are the same, so the restrictions of both diagrams to $C$ are identical. It therefore suffices to prove that $\VD(C')$ has a trichromatic face in $C$.
 
Since $C$ has boundary vertices of $P$ with more than two distinct labels, $C'$ has more than two sites. Since all the vertices on the infinite face of $C'$ are sites, and since every site is in its own Voronoi cell, $\VD(C')$ has at least two trichromatic faces~\cite{DBLP:conf/soda/Cabello17,vorexact} (one being the infinite face of $C'$). Observe that all the newly introduced faces of $C'$ (those created by the edges connecting artificial vertices to vertices of $f$ or to each other) have either a single label (in case the face is a triangle formed by an artificial vertex and two consecutive vertices of $f$ with the same label), or two labels (in case the face has size 4, and is formed by two artificial vertices and two consecutive vertices of $f$ with two distinct labels). It follows that $\VD(C')$ must have a trichromatic face of $C$, and so does $\VDout^*(u,Q)$.     
\end{proof}

\section{Final remarks}
The main open question is whether there exists a distance oracle occupying linear (or nearly linear) space and requiring constant (or polylogarithmic) time to answer queries. Note that currently, any oracle with $\cOtilde(n)$ space requires polynomial query time. In particular, the fastest known oracles with strictly linear space~\cite{DBLP:conf/soda/MozesS12,DBLP:conf/wads/Nussbaum11} require $\Omega(n^{1/2+\epsilon})$ query time.
Another important question concerns the construction time. Is there a nearly-linear time algorithm to construct our oracle, or oracles with comparable space to query-time tradeoffs?

\bibliographystyle{plain}

\end{document}